%arara: biber
\documentclass[final,1p,times,sort&compress]{elsarticle}
%\documentclass[]{article}

%ELS
\journal{$\;$}

\usepackage{etoolbox}

%Set the date and the status of the paper

\makeatletter
\patchcmd{\ps@pprintTitle}% <cmd>
  {Preprint submitted to}% <search>
  {Preprint}% <replace>
  {}{}% <succes><failure>
\makeatother

\usepackage[utf8]{inputenc}

\usepackage{subfigure}

%\usepackage[round]{natbib}

%%%%%%%%%%\usepackage{amssymb}
%%%%%%%%%%\setcounter{tocdepth}{3}
%%%%%%%%%%
%%%%%%%%%%\usepackage{verbatim} 
\usepackage{enumerate}
%%%%%%%%%%
%%%%%%%%%%%\let\proof\relax
%%%%%%%%%%%\let\endproof\relax
%%%%%%%%%%%\usepackage{amsthm}
\usepackage{amsmath}
%%%%%%%%%%\usepackage[bbgreekl]{mathbbol}
%%%%%%%%%%\usepackage[vlined,ruled,commentsnumbered]{algorithm2e}
%%%%%%%%%%
%%%%%%%%%%\usepackage{mathtools}

%=======================================================================
% Graphics includes
%=======================================================================
%\usepackage{graphicx}

%\usepackage{tikz} 
%\usetikzlibrary{arrows,decorations.pathmorphing,backgrounds,
    %positioning,fit,petri}
%=======================================================================
% End of: Graphics includes
%=======================================================================

%=======================================================================
% Bibliography
%=======================================================================

%%%\usepackage[backend=biber,
		%%%bibencoding=utf8,
		%%%%natbib=true,
		%%%style=numeric,
		%%%url=false,
		%%%sortcites=true,
    %%%doi=true]{biblatex}
    
%\bibliographystyle{splncs03}

%%%\renewbibmacro{in:}{}

%%%\renewbibmacro*{issue+date}{%
  %%%\iffieldundef{issue}%
    %%%{\printtext[parens]{%
      %%%{\usebibmacro{date}}}}
    %%%{\printtext[parens]{%
		  %%%{\printfield{issue}}}%
      %%%\setunit*{\addspace}%
      %%%\usebibmacro{date}}%
  %%%\newunit}

%\usepackage{url}

%\usepackage{natbib}
\usepackage{hyperref}

\bibliographystyle{elsarticle-num}

%%%\AtEveryBibitem{\clearfield{note}}
%%%%\AtEveryBibitem{\clearfield{doi}}
%%%\AtEveryBibitem{\clearfield{issn}}
%%%\AtEveryBibitem{\clearfield{isbn}}

%=======================================================================
% End of: Bibliography
%=======================================================================

% These macros are borrowed from TAOCPMAC.tex
% D.Knuth
%%%%\newcommand{\slug}{\hbox{\kern1.5pt\vrule width2.5pt height6pt depth1.5pt\kern1.5pt}}
%%%%\def\xskip{\hskip 7pt plus 3pt minus 4pt}
%%%%\newdimen\algindent
%%%%\newif\ifitempar \itempartrue % normally true unless briefly set false
%%%%\def\algindentset#1{\setbox0\hbox{{\bf #1.\kern2.25em}}\algindent=\wd0\relax}
%%%%\def\algbegin #1 #2{\algindentset{#21}\alg #1 #2} % when steps all have 1 digit
%%%%\def\aalgbegin #1 #2{\algindentset{#211}\alg #1 #2} % when 10 or more steps
%%%%\def\alg#1(#2). {\medbreak % Usage: \algbegin Algorithm A (algname). This...
  %%%%\noindent{\bf#1}({\it#2\/}).\xskip\ignorespaces}
%%%%\def\algstep#1.{\ifitempar\smallskip\noindent\else\itempartrue
  %%%%\hskip-\parindent\fi
  %%%%\hbox to\algindent{\bf\hfil #1.\kern.25em}%
  %%%%\hangindent=\algindent\hangafter=1\ignorespaces}
% end of borrowed macros

%ELS
%\setcounter{page}{1001}
% New definitions
\tolerance 5000
\hbadness 5000
\hfuzz=1pt

\newcommand{\Mod}[1]{\ (\mathrm{mod}\ #1)}

%=======================================================================
%\newtheorem{definition}{Definition}[section]
%\newtheorem{theorem}[definition]{Theorem}%[section]
%\newtheorem{fact}[definition]{Fact}%[section]
%\newtheorem{lemma}[thm]{Lemma}
%%%\newtheorem{lemma}[definition]{Lemma}
%\newtheorem{example}[definition]{Example}
%\newtheorem{assumption}[definition]{Assumption}
%\newtheorem{proposition}[definition]{Proposition}
%\newtheorem*{zremark}{Remark}
\newtheorem{thm}{Theorem}
\newtheorem{theorem}[thm]{Theorem}
\newtheorem{lemma}[thm]{Lemma}
\newtheorem{corollary}[thm]{Corollary}

%\newtheorem{proof}[definition]{Proof}

%ELS:
%\newdefinition{definition}{Definition}

%ELS:
%\newproof{proof}{Proof}
\newtheorem{proof}{Proof}

%\newcommand{\qed}{\null\hfill $\square$ }

%\theoremstyle{plain}
%\newtheorem{acknowledgement}{Acknowledgement}
%\newtheorem{algorithm}{Algorithm}
%\newtheorem{axiom}{Axiom}
%\newtheorem{case}{Case}
%\newtheorem{conclusion}{Conclusion}
%\newtheorem{condition}{Condition}
%\newtheorem{conjecture}{Conjecture}
%\newtheorem{criterion}{Criterion}
%\newtheorem{exercise}{Exercise}
%\newtheorem{notation}{Notation}
%\newtheorem{problem}{Problem}
%\newtheorem{remark}{Remark}
%\newtheorem{solution}{Solution}
%\newtheorem{summary}{Summary}
%\numberwithin{equation}{section}
%=======================================================================

%\definecolor{light-gray}{gray}{0.65}
%\newcommand{\zalt}[1]{\textcolor{light-gray}{/*\textbf{#1}*/}}

%=A trick for \nonl command for algorithm2e package ====================
\makeatletter
\newcommand{\nosemic}{\renewcommand{\@endalgocfline}{\relax}}% Drop semi-colon ;
\newcommand{\dosemic}{\renewcommand{\@endalgocfline}{\algocf@endline}}% Reinstate semi-colon ;
% Indent
% Undent
\let\oldnl\nl% Store \nl in \oldnl
\newcommand{\nonl}{\renewcommand{\nl}{\let\nl\oldnl}}% Remove line number for one line
\makeatother
%=======================================================================

% Auxiliary commands
%=================================================================
\DeclareRobustCommand{\&}{%
  \ifmmode\expandafter\mathbin\fi\char`&
}

\newcommand{\zQbb}{\mathbb{Q}}

\newcommand{\zPcal}{\mathcal{P}}

\newcommand{\zWcal}{\mathcal{W}}

\newcommand{\nc}{\newcommand}
\nc{\blacktoken}{\mbox{\sl black}}% Colorless token

 %interpretation

% NP-nets

%\newcommand{\Tr}{\mathord{\mathit{Tr}}}

% End of: NP-nets

%=======================================================================

%\input{common/3_spice.tex}

%\input{parts/0_title.tex}

\begin{document}
%\publicationdetails{VOL}{2020}{ISS}{NUM}{SUBM}
%\maketitle

%\input{parts/0_abstract.tex}

%=======================================================================
% Import illustrations
%=======================================================================

%\input{img/boxpicI}

%=======================================================================
% End of: Import illustrations
%=======================================================================

%\mainmatter  % start of an individual contribution

%ELS
\begin{frontmatter}

% first the title is needed
%\title{On Existence of Linear Dynamical Metric Graphs with
%Slowest Growth Rate
\title{Towards Dynamic-Point Systems on Metric Graphs with
Longest Stabilization Time
%\thanks{This work is supported by the Basic Research Program of
%the National Research University Higher School of Economics and
%Russian Foundation for Basic Research, project No. 16-37-00482 mol\_a.}
}
% a short form should be given in case it is too long for the running head
%\titlerunning{ }

%ELS
\author[nru]{Leonid W. Dworzanski}% \corref{cor1}}
%\author{ \\ \texttt{leo@mathtech.ru} } % \corref{cor1}}
\ead{leo@mathtech.ru}

%\author{Leonid W. Dworzanski\affiliationmark{}\thanks{Acknowledgments: The reported study was funded by RFBR, project number 20-07-01103}}
%\title[Towards DP-systems on metric graphs with longest stabilization time]{Towards Dynamic-Point Systems on Metric Graphs with
%Longest Stabilization Time}
% put your affiliation here, not your full address.
% If you like to give away your email or other parts of your address,
% THIS IS NOT THE RIGHT PLACE, your address will change, this paper
% will not.
% Just watch that your personal data that you want to communicate on
% the episcience server is always up to date.
%\affiliation{
%  % one line per affiliation, no postal codes, grant numbers or similar
%  National Research University Higher School of Economics, Russia%\\
%  %ICube, university of Strasbourg, France\\
%  %Alma Mater, campus universalis, terra incognita
%  }
%\keywords{quantum graphs, stabilization time, system of dynamic points,
%discrete event dynamic systems}
% %%don't try to cheat here, we will check the dates!
%\received{2020-10-12}
%\revised{xxxx-xx-xx}
%\accepted{xxxx-xx-xx}

%\maketitle

\address[nru]{National Research University Higher School of Economics,
Myasnitskaya ul. 20, Moscow, Russia 101000}

%\cortext[cor1]{Corresponding author}

%\toctitle{Lecture Notes in Computer Science}
%\tocauthor{Authors' Instructions}

%%%%
%%%%
%%%%% first the title is needed
%%%%\title{On Compositionality of 1L-Liveness and Local Path-Based Properties
  %%%%for Nested Petri nets 
%%%%%\thanks{This work is supported by the Basic Research Program of
%%%%%the National Research University Higher School of Economics and
%%%%%Russian Foundation for Basic Research, project No. 16-37-00482 mol\_a.}
%%%%}
%%%%% a short form should be given in case it is too long for the running head
%%%%\titlerunning{ }
%%%%
%%%%\author{Leonid W. Dworzanski%
%%%%}%
%%%%\authorrunning{ }
%%%%%
%%%%\institute{National Research University Higher School of Economics,\\
%%%%Myasnitskaya ul. 20, 101000 Moscow, Russia\\
%%%%leo@mathtech.ru \\
%%%%%\url{http://www.hse.ru}
%%%%}
%%%%
%%%%\toctitle{Lecture Notes in Computer Science}
%%%%\tocauthor{Authors' Instructions}
%%%%

\begin{abstract}

A dynamical system of points moving along the edges of a graph could
be considered as a geometrical discrete dynamical system or as
a discrete version of a quantum graph with localized wave packets.
%that have longest stabilization time are studied.
We study the set of such systems over metric graphs that can be constructed
from a given set of commensurable edges with fixed lengths.
It is shown that there always exists a system consisting of a bead graph with
vertex degrees not greater than three that demonstrates the longest stabilization
time in such a set.
The results are extended to graphs with incommensurable edges
using the notion of $\varepsilon$-nets and,
also, it is shown that dynamical systems of points on linear
graphs %with one initial point
have the slowest growth of the number of dynamic points.
%
%structural properties of dynamic-point systems
%%over such metric graphs
%that demonstrate the longest stabilization time are considered.
%It is shown that,
%in such a set of dynamical systems with the longest stabilization time,
%there is a system consisting of a bead graph
%with vertex degrees not greater than three.
\end{abstract}

%ELS
\begin{keyword}
metric graphs, dynamical systems of points, longest stabilization time,
saturation time
\end{keyword}

%ELS
\end{frontmatter}

%ELS

%=======================================================================

\section{Introduction}   
%================================================================================
% Introduction
%The case of quantum graph with narrow localized wave packets was recently
%studied in
%\cite{Chernyshev10_TimeDepSchrodingerEqStatisticsGaussianPacketsMG,
%ChernyshevTolchennikovShafarevich16_BehaviorQuasiParticlesHybridSp_RelGeomGeodesicsAnalyticNT}.

A dynamical system of points ($\mathit{DP}$-system) moving along the edges of
a metric graph,
yet its dynamics has a discrete nature, %that are studied in this paper
could be considered as a simplified discrete model of a quantum graph with
narrow localized wave packets \cite{Chernyshev10_TimeDepSchrodingerEqStatisticsGaussianPacketsMG,
ChernyshevTolchennikovShafarevich16_BehaviorQuasiParticlesHybridSp_RelGeomGeodesicsAnalyticNT}.
A quantum graph is a metric graph equipped with functions on its edges,
a differential operator acting on such functions, and matching conditions
on its vertices \cite{BerkolaikoKuchment13_IntroQuantum}.
Quantum graphs occurred as a model or tool in a number of problems
in chemistry, physics, engineering, and mathematics since 1930s
\cite{Pauling36_DiamagneticAnisotropyAromMol,Exner20_TopologicalBulkEdgeEffect}.
There exists a correspondence between the statistics of localized solutions
on a quantum graph and the dynamics of a $DP$-system.
Points in such a system may represent supports of Gaussian wave packets in a
quantum graph and/or projection of wave propagation on medium geodesics
\cite{ChernyshevTolchennikovShafarevich16_BehaviorQuasiParticlesHybridSp_RelGeomGeodesicsAnalyticNT}.

Some results towards the characteristics of the dynamics of such systems
were recently obtained in
\cite{TolchennikovChernyshev10_AsymptoticProperties,
%Chernyshev14_AsymptoticEstimateGaussianPackets,
Chernyshev18_AsymptoticCountingDynamicalDecorated,
Chernyshev17_CorrectionCountingMetricTree,
Chernyshev17_SecondTermNumberPointsMetricGraph}.
The growth of the number of points moving along edges and its asymptotics
are studied for metric trees in \cite{Chernyshev17_CorrectionCountingMetricTree}
reducing the counting problem to counting the number of lattice points
in an expanding polyhedra using Barnes multiple Bernoulli polynomials;
and, for some special cases, in \cite{ChernyshevTolchennikovShafarevich16_BehaviorQuasiParticlesHybridSp_RelGeomGeodesicsAnalyticNT}.
Further details and motivation for studying such systems, that emerges from
mathematical physics and other fields,
can be found in texts \cite{BerkolaikoKuchment13_IntroQuantum}
and \cite{Berkolaiko06_QuantumGraphsApps}.

Recently, in \cite{dworzanski2020CorrespondenceMGandTaPN}, it was shown that,
for a subclass of Timed-Arc Petri nets ($TaPN$-nets),
it is possible to overapproximate the number of timers with different values
in a $TaPN$-net by the number of points in a $DP$-system;
thus, asympotical estimations obtained for the number of points in a $DP$-system in
\cite{Chernyshev18_AsymptoticCountingDynamicalDecorated, Chernyshev17_CorrectionCountingMetricTree}
could be used to analyse behavioural properties related to timers in $TaPN$-nets.
Additionaly, the translation of a $DP$-system into a $TaPN$-net was developed
and implemented in TAPAAL to permit the analysis of fine TCTL properties of
a $DP$-system.

%TODO: Review before sending
Additional motivation comes from the problem of upper bounds on the
stabilization time for a $\mathit{DP}$-system on an arbitrary
metric graph.
While there exists asympotical estimates for the growth of the number of points
in a $DP$-system on an arbitrary metric graph,
estimates for stabilization time of a $DP$-system are obtained only for
two mostly trivial cases --- star graphs and graphs with edges of equal length.
In this paper, we develop an approach based on graph surgery to show
that, in the set of all dynamic-point systems with longest stabilization time
($\mathit{LSTDP}$-systems) constructed from a fixed set of edges,
there exists a $\mathit{DP}$-system on a metric graph with a specific structure
(bead graphs) having a dynamic point at one of its terminal vertices.
The results allow to take into account only the set of its edges and
$\mathit{LSTDP}$-systems of a specific structure to study
longest stabilization time.

Section~2 contains basic notions and definitions.
In section~3, it is shown that longest stabilization time could not always
be achieved using only tree metric graphs;
also, we introduce notions of point-places and walk classes used in
the next section.
Section~4 demonstrates the existence of a bead graph with vertex degrees
not higher than three among $\mathit{LSTDP}$-systems.
Section~5 extends the result of the previous section to metric graphs
with incommensurable edge lengths.
Section~6 concludes the paper with some further directions.

%=======================================================================

%=======================================================================
%\input{parts/2_motivating_example.tex}
%=======================================================================

%=======================================================================
\section{Preliminaries}
\label{sec:preliminaries}

%By $\zNbb$ we denote the set of non-negative natural numbers.
%For a set $S$, a \emph{bag (multiset)} $m$ over $S$ is a mapping
%$m:S\rightarrow \Nat$.
%The set of all bags over $S$ is denoted by $\Nat^S$.
%We denote addition and subtraction of two bags by $+$ and $-$,
%the number of all elements in $m$ taking into account the multiplicity
%by $\| m \|$, and comparisons of bags by $=, <, >, \leq, \geq$.
%that are defined as
%$m_1 \mathrel{R} m_2 \equiv \forall s \in S : m_1(s) \mathrel{R} m_2(s)$
%where $\mathrel{R}$ is one of $=,<,>,\leq,\geq$.
%We overload the set notation writing $\emptyset$ for the empty bag and $\in$
%for the element inclusion.

A metric graph $\Gamma$ is a graph consisting of set of vertices $V$,
set of undirected edges $E$,
and length function $l$ mapping each edge $e=\{ v_{1},v_{2}\} \in E$
to a positive real,
i.e., $l: E \to \mathbb{R_+}$ \cite{BerkolaikoKuchment13_IntroQuantum}.
For technical convenience, %when needed,
each undirected edge $e=\{ v_{1},v_{2}\}$ in $E$ may be considered as
a pair of arcs $\langle v_{1},v_{2}\rangle$ and $\langle v_{2},v_{1}\rangle$,
and both arc lengths coincide with the length of $e$;
both notions will be used in the paper interchangeably to
shorten some lengthy technical explanations.

The arc opposite to arc $a=\langle v_j,v_i \rangle$ is denoted by $\bar{a}$,
i.e., $\bar{a}=\langle v_i,v_j \rangle$.
For two points $x$ and $y$ on the graph, metric $\rho(x,y)$ is the shortest distance
between them, where distance is measured along the edges of the graph
additively.
A walk is a finite or infinite sequence of edges (arcs for directed graphs)
which joins a sequence of vertices.
A trail (path) is a walk in which all edges (vertices) are distinct.

The set of all walks from vertex $v$ to vertex $v'$ is denoted by
$\mathcal{W}(v,v')$.
For a walk $w$, the length $l(w)$ is the sum of lengths of all the arc entries in $w$.
The support of walk $w$ is the set of all arcs in $w$ and is denoted by $S(w)$.
The (directed) multisupport of walk $w$ in graph $\Gamma$ is a new (directed)
graph $\overline{S}(w)$ obtained from $\Gamma$ such that it has the same vertices $V$ as
$\Gamma$, and, for each entry of arc $\langle v_i, v_j \rangle$ in $w$,
we introduce a new edge $\{ v_i, v_j \}$
(new arc $\langle v_i, v_j \rangle$) into $\overline{S}(w)$,
i.e., the number of edges (arcs) in $\overline{S}(w)$ is equal to the
number of arc entries in $w$.
Note that a walk may contain multiple entries of an arc.

%an interval $[0,L_e]$, and $x_{e}$ is the coordinate on the interval
%. 
%%%Vertex $v_{1}$ corresponds to $x_e=0$, and $v_{2}$ to $x_e=L_e$,
%%%or vice versa.
%%%The choice of which vertex lies at zero is arbitrary with the alternative
%%%corresponding to a change of coordinate on the edge.

\begin{figure}[!t]
    \begin{center}
        \centerline{\includegraphics[scale=1]{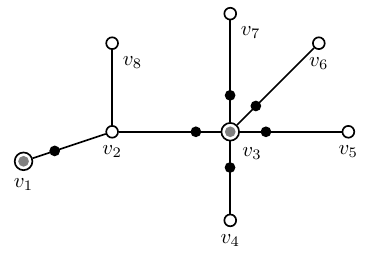}}
        \caption{System of dynamic points $\zPcal_\Gamma$
          on metric graph $\Gamma$}
        \label{fig:mg}
    \end{center}
\end{figure}

For a metric graph $\Gamma$, the dynamics of a system of dynamic points
$\mathcal{P}_\Gamma$ on $\Gamma$
is defined as following.
In the initial state, some vertices of $\Gamma$ hold a dynamic point.
When time starts to flow, each such point $p$ located in vertex $v$,
for each edge $e$ incident to $v$,
produces a point $p'$ on each $e$, and $p$ disappears
(intuitively, this corresponds to wave packet scattering);
each produced point $p'$ starts moving along corresponding~$e$.
Note that if we consider $e$ as a pair of directed arcs,
then $p'$ is generated on and moving along the arc outgoing from $v$
respecting the arc direction.
All points move with the same constant velocity; and,
due to new points generation, some arcs may carry more than one point.
When a moving point reaches vertex $v'$, again,
on each outgoing arc incident to $v'$, a new point is generated.
When more than one points reach a vertex simultaneously at $t$,
on each outgoing arc, only one point is produced,
as if only one point has reached the vertex at $t$;
i.e., points met on a vertex fuse, and each coordinate of an arc can carry
only one dynamic point.
However, points do not collide anywhere on edges except vertices,
i.e., if two points met on an edge,
they both continue their movement towards their own directions.
This becomes clearer if we consider the edge as a pair of arcs;
then, the points, converging on an edge, are moving along separate opposite arcs.
In Figure~\ref{fig:mg}, the initial set of points consists of two points in
vertices $v_1$ and $v_3$.
The point in $v_1$ produces a new point on edge $\{ v_1,v_2 \}$,
The point in $v_3$ produces points on edges leading to vertices
$v_2,v_4,v_5,v_6,v_7$.
After a time unit, there are no points in $v_1$ and $v_3$ (coloured gray),
but there are points (coloured black) moving from $v_1$ and $v_3$ 
to their adjacent vertices.

More examples and further details on $\mathit{DP}$-systems on metric graphs
and some of their extensions can be found in
\cite{Chernyshev17_SecondTermNumberPointsMetricGraph,
Chernyshev17_CorrectionCountingMetricTree,
ChernyshevTolchennikovShafarevich16_BehaviorQuasiParticlesHybridSp_RelGeomGeodesicsAnalyticNT}.

%=======================================================================

%=======================================================================
%== Linear metric graphs with slowest growth rate
\section{Cutting of a metric graph}

The number of dynamic points in $\mathcal{P}_\Gamma$ at time $t$ is denoted
by $N_{\mathcal{P}_\Gamma}(t)$.
%Note that the growth of $N_{\mathcal{P}_\Gamma}(t)$ at time $t_0$ is possible
%iff the number of points simultaneously reached vertex $v$ at $t_0$
%is strictly less than the (out-) degree of $v$.
The number of points on edge $e$ at time $t$ is denoted by $N_e(t)$.
For dynamical systems of points $\mathcal{P}_{\Gamma}$ and
$\mathcal{P}'_{\Gamma'}$,
we say that the growth rate of $\mathcal{P}_\Gamma$ is equal or less than
thus of $\mathcal{P}'_{\Gamma'}$ if
$\forall t \in \mathbb{R}_+{\setminus}\mathit{Coll} :
  N_{\mathcal{P}_\Gamma}(t) \le N_{\mathcal{P}'_{\Gamma'}}(t)$,
where $\mathit{Coll}$ is the (countable) set of time points when
more than one dynamic points meet on a vertex; we exclude such time points as
the number of dynamic points decreases for these moments, technically.
In what follows,
we discuss growth and stabilization implicitly omitting vertices collision
time points $\mathit{Coll}$.
  
The stabilization time $t_s(\mathcal{P}_\Gamma)$ of $\mathit{DP}$-system
$\mathcal{P}_\Gamma$ on graph $\Gamma$ with edges of commensurable lengths
is the value of the period of time from the initial time point to
the point in time when the number of dynamic points $N_{\Gamma}(t)$ on $\Gamma$
has been stabilized
\cite{ChernyshevTolchennikovShafarevich16_BehaviorQuasiParticlesHybridSp_RelGeomGeodesicsAnalyticNT},
\textit{i.e.},
$\forall t \in \mathbb{R}_+{\setminus}\left(\left[0,t_s\left(\mathcal{P}_\Gamma\right)\right)
\cup\mathit{Coll}\right)
: N_{\Gamma}(t) = N_{\Gamma}(t_s(\mathcal{P}_\Gamma))$.
For a given set of edges $E$, let $\mathit{LSTDP}(E)$ be the set of $\mathit{DP}$-systems
constructed from $E$
demonstrating the longest stabilization time ($\mathit{LSTDP}$-systems);
$\mathit{LSTDP}(E)$ is not necessarily a singleton.

Number $N_{\mathcal{P}_\Gamma}(t)$ increases at time $t_0$
when the number of points simultaneously reached a vertex $v$ at $t_0$
is strictly less than the (out-) degree of $v$.

In this section, it is shown that, while for an arbitrary $\mathit{DP}$-system
$\mathcal{P}_\Gamma$,
there is always a tree $\mathit{DP}$-system which growth rate is equal or less than
thus of $\mathcal{P}_\Gamma$;
set $\mathit{LSTDP(E)}$ does not always contain a $\mathit{DP}$-system
on a tree.

%for any dynamical system $\mathcal{P}_\Gamma$ on
%a graph $\Gamma$ constructed from set of edges $E$,
%there is a $\mathcal{P}'_{\Gamma'}$ consisting of linear graph $\Gamma'$
%constructed from $E$ and one dynamic point at its terminal vertex,
%such that the growth rate of $\mathcal{P}'_{\Gamma'}$ is slower than
%thus of $\mathcal{P}_\Gamma$.
%However, a longest stabilization time DP-system does not always
%contains a linear metric graph.

\begin{figure}[!t]
    \begin{center}
        \centerline{\includegraphics[scale=1]{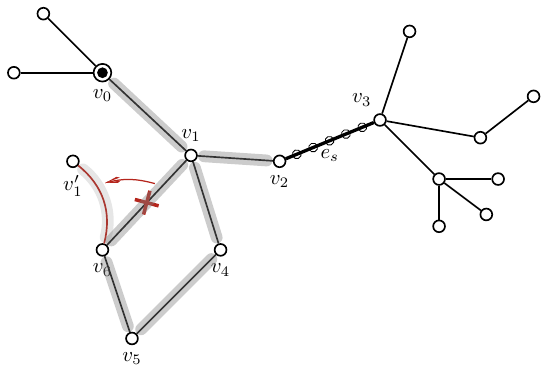}}
        \caption{Metric graph with a cycle $\langle v_1, v_4, v_5, v_6 \rangle$}
        \label{fig:NP_cycleelim}
    \end{center}
\end{figure}

To begin with, we suggest to partition walks into equivalence classes
under the following equivalence.
Let $v$ be a vertex and $e=\{ v_a, v_b \}$ be an edge of $\Gamma$.
Let set $\zWcal(v, e)$ be the union of sets $\zWcal(v, v_a)$ and $\zWcal(v, v_b)$.
Let us introduce an equivalence relation $\approx$ over set of walks
$\zWcal(v,e)$ defined as follows:
\begin{itemize}
	\item if walks $w_1$ and $w_2$ both end in $v_a$ or both end in $v_b$,
then they are equivalent iff their lengths are congruent modulo $2l(e)$, \textit{i.e.},
$$%
%\begin{multline*}
 \forall \langle w_1,w_2 \rangle \in \zWcal(v,v_a)^2
\cup \zWcal(v,v_b)^2:%\\
w_1 \approx w_2 \iff l(w_1)\equiv l(w_2) \Mod{2l(e)}
%\end{multline*}
$$
  \item if, w.l.o.g., $w_1$ ends in $v_a$, and $w_2$ ends in $v_b$,
then they are equivalent iff the difference of their lengths is congruent
to $l(e)$ modulo $2l(e)$, \textit{i.e.},
$$\forall \langle w_1,w_2 \rangle \in \zWcal(v,v_a) \times \zWcal(v,v_b):
w_1 \approx w_2 \iff l(w_1)-l(w_2) \equiv l(e) \Mod{2 l(e)}$$
%%%\begin{multline*}
%%%\forall \langle w_1,w_2 \rangle \in \zWcal(v,v_a) \times \zWcal(v,v_b):
%%%%\forall w_1\in W(v,v_a): \forall w_2 \in W(v,v_b):
%%%\\ w_1 \approx w_2 \iff l(w_1)-l(w_2) \equiv l(e) \Mod{2 l(e)}
%%%\end{multline*}
\end{itemize}

%\begin{multline}
	%\sum\limits_{p\in \zpreset{t_\zsmSN}}{w_t(p) \cdot \|\gamma(\langle p, t_\zsmSN \rangle)\|}\\
	%= \sum\limits_{p \in \zpostset{t_\zsmSN}} \left( {  w_t(p) \cdot \|\gamma(\langle t_\zsmSN, p\rangle)\|}
	 %+ w_m(p) \cdot \quad \sum\limits_{\mathclap{c \in Con(\langle t_\zsmSN, p \rangle)}}{
			%\enskip \zhWcal(p)(c)} \right),
	%\label{eqn:SA1}
%\end{multline}
For each edge $e=\{v_1,v_2\}$, set $\zWcal(v_0,e)$ is partitioned into
equivalence classes under~$\approx$.
The arrival of point $p$ at vertex $v$ of $e$,
moving from $v_0$ to $v$ along walk $w$,
induces a new point on $e$ iff $w$ has the minimum length for its class
$[w]_{\approx}$; \textit{i.e.}, only the shortest walks of $[w]_{\approx}$
induce a new point on $e$.
Now, consider two arcs $\langle v_1,v_2 \rangle$ and $\langle v_2,v_1 \rangle$
as a set of point-places forming a loop;
the point-places move around the loop along the arc directions with the same
velocity as dynamic points.
We name them point-places as, when a dynamic point reaches a point-place,
it continues its movement bound to the point-place;
we denote point-places that will be reached by dynamic points with empty circles
(as in Fig.~\ref{fig:NP_cycleelim} on edge $e_s$).
Thus, for edge $e$, an equivalence class $[w]_{\approx}$ of $\zWcal(v_0,e)$
corresponds to a point-place which will be saturated by a dynamic point
that came from $v_0$ along the minimal path of $[w]_{\approx}$.
It is possible to define point-place evolution and corresponding
time-dependent equivalence classes in a time-dependent manner;
to avoid handling time-dependent coordinates, we just take
class representatives for $t=0$.

The first quite obvious observation is that, for any dynamical system
$\mathcal{P}_\Gamma$,
there is a one-point $\mathit{DP}$-system on a tree constructed from the same
edges that has less or equal growth rate.
We consider only connected $\mathit{DP}$-systems with at least one dynamic point.
For disconnected graphs,
stabilization can be considered for its components independently.

\begin{lemma}
Let $E$ be a given set of edges with a length function $l$ on $E$,
and $\mathcal{P}_\Gamma$ be a dynamical system with a graph $\Gamma$
constructed from $E$.
Then, there is a dynamical system $\mathcal{P}'_{\Gamma'}$ consisting of
a tree graph $\Gamma'$ constructed from $E$ and one dynamic point,
such that the growth rate of $\mathcal{P}'_{\Gamma'}$ is equal or slower than
thus of $\mathcal{P}_\Gamma$.
\label{lemma:slowtree}
\end{lemma}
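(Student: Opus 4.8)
The plan is to prove the lemma in three moves: reduce from many initial points to a single one, transform $\Gamma$ into a tree by a sequence of elementary cycle-breaking cuts, and argue that none of these operations increases the point count at any fixed time. The whole argument rests on the following reachability characterization: for $t\notin\mathit{Coll}$, a position $(e,d)$ (a point at offset $d$ along an arc of $e$) carries a dynamic point if and only if it is reachable from an initial vertex by a walk of length exactly $t$. Indeed, such a point was generated at a vertex at time $t-d$ and has since travelled distance $d$, and simultaneously arriving points fuse, so each position carries at most one point; hence $N_{\mathcal{P}_\Gamma}(t)$ equals the number of reachable positions and it suffices to compare reachable sets. As a first consequence, fixing any $v_0$ in the initial set $S$ of $\mathcal{P}_\Gamma$, every walk from $v_0$ is a walk from $S$, so every position reachable in time $t$ from $v_0$ is reachable from $S$; thus the one-point system started at $v_0$ already satisfies $N(t)\le N_{\mathcal{P}_\Gamma}(t)$ for all $t$, and it remains to dominate this one-point system by a one-point system on a tree built from $E$.

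Next I cut cycles into a tree. Let $r=|E|-|V|+1$ be the cycle rank of the connected graph $\Gamma$. While $r>0$, pick an edge $e=\{u,w\}$ lying on some cycle and pick an endpoint $w\neq v_0$ (possible since the two endpoints are distinct, so at least one is not $v_0$). Detach $e$ from $w$ and reattach its free end to a fresh leaf vertex $w'$, so that $e$ becomes $\{u,w'\}$ with unchanged length (this is the cut pictured in Fig.~\ref{fig:NP_cycleelim}). Because $e$ lay on a cycle the graph stays connected, while the cycle rank drops by one and the vertex count rises by one. After $r$ cuts I obtain a connected acyclic graph $\Gamma'$, i.e.\ a tree, whose edge set is exactly $E$ with unchanged lengths and on which $v_0$ is still a vertex; I place the single dynamic point there.

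It then remains to show that each cut slows the growth. A single cut is a special vertex split: it replaces $w$ by $w$ (keeping all its arcs except those of $e$) and $w'$ (carrying only $e$). Merging $w'$ back into $w$ gives a bijection on $E$, hence a length-preserving bijection on positions, and it sends every walk of $\Gamma'$ to a walk of $\Gamma$ of the same length and the same terminal position (a walk that bounces at the leaf $w'$ along $e$ becomes a walk that bounces at $w$ along $e$, and every other walk survives verbatim). Consequently every $\Gamma'$-position reachable in time $t$ maps to a $\Gamma$-position reachable in time $t$, and since the position map is injective the reachable set of the cut graph injects into that of the original. By the reachability characterization this gives $N_{\Gamma'}(t)\le N_\Gamma(t)$ for every $t\notin\mathit{Coll}$. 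Chaining the $r$ cuts with the reduction to one point produces a one-point tree system $\mathcal{P}'_{\Gamma'}$ on $E$ whose growth rate is equal or slower than that of $\mathcal{P}_\Gamma$, as required.

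The main obstacle is the last step: one must be certain that breaking an edge off a shared vertex genuinely removes walks rather than creating new ones, which is exactly the claim that the merge map embeds the walks of $\Gamma'$ into those of $\Gamma$, and that this embedding is faithful at the level of occupied point-places, so that fusion in $\Gamma$ cannot conceal a position that is occupied in $\Gamma'$. By contrast the bookkeeping of the cutting stage — staying connected, using every edge of $E$, and never detaching $v_0$ — is routine once each cut is chosen to act at a non-$v_0$ endpoint of a cycle edge.
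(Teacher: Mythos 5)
Your proposal is correct and follows essentially the same route as the paper's proof: reduce to a single initial point, repeatedly detach a cycle edge from one endpoint and reattach it to a fresh leaf, and observe that re-merging the leaf sends every walk of the cut graph to an equal-length walk of the original, so cutting can only remove walks and hence only slow growth. Your explicit position-reachability characterization of $N_{\mathcal{P}_\Gamma}(t)$ is a self-contained restatement of the paper's point-place/walk-class bookkeeping (its ``saturated earlier'' argument), not a different method.
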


\begin{proof}
Let $\mathcal{P}_\Gamma$ have $N_0$ initial points on metric graph $\Gamma$.
Obviously, elimination of points in $\mathcal{P}_\Gamma$ can only decrease
growth rate.
Thus, if $N_0 > 1$, it is safe to eliminate all points in $\zPcal_\Gamma$
except one arbitrary dynamic point $p_0$ in vertex $v_0$.

Let $\Gamma$ be not acyclic.
The proof is based on an observation that a cutting of a cycle
in $\Gamma$ can only decrease growth rate.
Let
$\langle v_{i_1},v_{i_2} \rangle=a_1,\ldots, a_k=\langle v_{i_{k}},v_{i_1}\rangle$
be a cycle in $\Gamma$.
Let us split the cycle at arc $a_k$, \textit{i.e.}, a vertex $v_{i_1}'$
is introduced,
arcs $a_k$ and $\bar{a}_k$ are replaced with $a_k'=\langle v_k,v_{i_1}'\rangle$ and
$\bar{a}_k'$ in resulting graph $\Gamma'$;
in Figure~\ref{fig:NP_cycleelim}, cycle $\langle v_1, v_4, v_5, v_6 \rangle$
is split at vertex $v_1$, \textit{i.e.}, $\{ v_6,v_1\}$ is removed, 
vertex $v_1'$ and edge $\{v_6,v_1'\}$ are added.
For any edge $e$, the operation can only narrow set $\zWcal(v_0,e)$;
\textit{i.e.}, all walks of the resultant graph $\Gamma'$ are realizable in $\Gamma$.
%%%namely, for any walk in $\Gamma'$, obtain its greedy version, and
%%%consider it in $\Gamma$.
Thus, no shorter walks are introduced;
and, no point-place on $e'$ in $\Gamma'$ is saturated earlier than
the corresponding one on $e$ in original $\Gamma$.

By repeating cycle elimination, we obtain $\mathcal{P}'_{\Gamma'}$ on
a tree metric graph.
%\qed
\end{proof}

\begin{figure}[!t]
    \begin{center}
        \centerline{\includegraphics[scale=0.8]{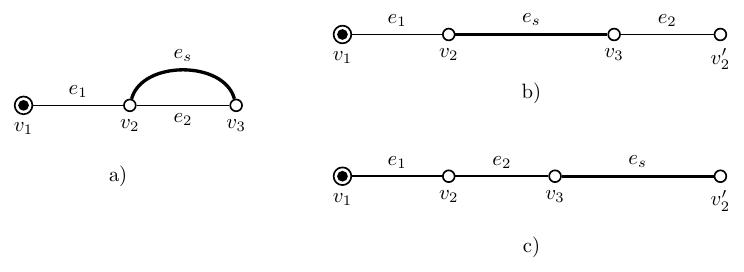}}
        \caption{Metric graph $\Gamma$ at a) and its possible cuttings
          $\Gamma'$ at b) and $\Gamma''$ at c)}
        \label{fig:MG_stabtime}
    \end{center}
\end{figure}

However, we cannot use the same approach to achieve an $\mathit{LSTDP}$-system
on a tree.
A cutting may completely eliminate some classes of walks
of original $\zPcal_\Gamma$ that stipulate longest stabilization time
($\mathit{LST}$-classes),
\textit{i.e.}, those whose shortest walks are longest ($\mathit{LST}$-walks)
among all
the walk classes of $\zPcal_\Gamma$ for all the edges of $\Gamma$.
After cutting, the dynamic points corresponding to such $\mathit{LST}$-classes never occur in
$\zPcal_\Gamma$,
so stabilization does not depend on (wait for) these dynamic points
and may occur earlier.
%There are several ways to cope with this;
%
The metric graph $\Gamma$ that demonstrates such phenomenon is depicted in
Figure~\ref{fig:MG_stabtime}a).
The dynamic point system $\zPcal_\Gamma$ on $\Gamma$ consists of only one dynamic
point in $v_1$. Lengths of $e_1$ and $e_2$ are equal to $1$, and thus of $e_s$
is equal to $2$.
The stabilization time $t_s(\zPcal_\Gamma)$ is equal to $4$,
and $\forall t > t_s(\zPcal_\Gamma): N_{\Gamma}(t) = 8$.
If we cut $\Gamma$ into $\Gamma'$ as in Figure \ref{fig:MG_stabtime}b),
stabilization time $t_s(\zPcal_\Gamma')$ is downgraded to $3$, and
$\forall t > t_s(\zPcal_\Gamma'): N_{\Gamma'}(t) = 4$.

To handle this obstacle, we suggest to fix one of shortest walks in
an $\mathit{LST}$-class, \textit{i.e.,} an $\mathit{LST}$-walk,
and conduct the cutting preserving the walk.
In the example above, the `missed' points were generated by the eliminated
walks in the class that contains $\mathit{LST}$-walks $e_1, e_1, e_1, e_2$ and
$e_1, e_2, e_2, e_2$.
Thus, if we cut graph $\Gamma$ into $\Gamma''$ preserving such walks,
then $\zPcal_{\Gamma''}$ will have stabilization time not less
than $\zPcal_\Gamma$ even if some non-$\mathit{LST}$-classes vanish and
overall $N_\Gamma$ could decrease.
A possible cutting is depicted in Figure~\ref{fig:MG_stabtime}c).

\begin{figure}[!t]
    \begin{center}
        \centerline{\includegraphics[scale=0.8]{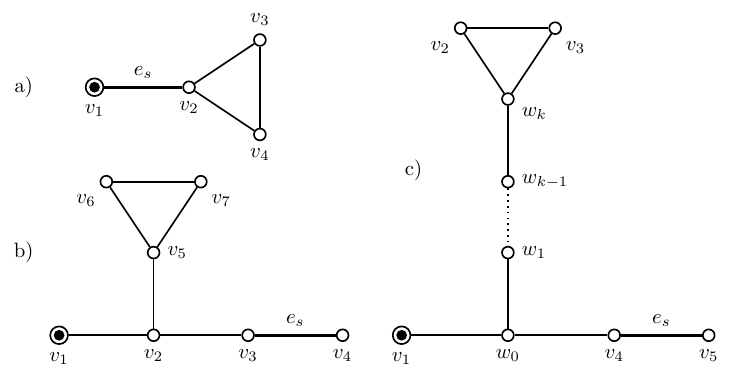}}
        \caption{Examples of $\mathit{DP}$-systems that have stabilization time greater
          than any acyclic graph with the same set of edges.
          All edges have length one.}
        \label{fig:MG_uncuttable}
    \end{center}
\end{figure}

If such a cycle cutting preserving $\mathit{LST}$-walk would always be found,
then $\mathit{LSTDP}$ should always contain a tree and, moreover, using
a procedure from the next section, a linear graph.
Unfortunately, the graphs of $\mathit{DP}$-systems in Figure~\ref{fig:MG_uncuttable}
cannot be cut into an acyclic graph preserving the stabilization time.
All the edges of the graphs in Figure~\ref{fig:MG_uncuttable} have length
equal to 1.
For $\mathit{DP}$-system in \ref{fig:MG_uncuttable}a), it could be checked manually;
the last point is generated on $e_s$ at $t=4$, while, for a linear graph with
4 edges, stabilization time is $3$, for other trees --- even less.
Figure \ref{fig:MG_uncuttable}b) shows that an uncuttable cycle could lie
not on a path from initial point vertex $v_1$ to stabilization edge $e_s$.
Figure \ref{fig:MG_uncuttable}c) shows that the absolute difference between
the stabilization time of $\mathit{LSTDP}$-system on \ref{fig:MG_uncuttable}c)
($t_s=2k+5$) and a linear graph with the same set of edges ($t_s=k+5$)
could be arbitrary large.

%== End of Linear metric graphs with slowest growth rate

%=======================================================================
%== Beads metric graphs with longest stabilization time
\section{Bead $\mathit{DP}$-systems with longest stabilization time}
In this section, by applying $\mathit{LST}$-path preserving operations,
we show that set $\mathit{LSTDP}(E)$ always contains a $\mathit{DP}$-system over
a graph with a specific structure.

While it is not hard to see that a star metric graph has the stabilization time
not more than two times greater than the shortest stabilization time
for a fixed set of edges,
structure of $DP$-systems that demonstrate longest stabilization time
is not yet understood.
Also, $\mathit{DP}$-systems on a metric graph with incommensurable edges never
stabilizes and, moreover, ergodic 
\cite{ChernyshevTolchennikovShafarevich16_BehaviorQuasiParticlesHybridSp_RelGeomGeodesicsAnalyticNT};
therefore, we consider stabilization time for metric graphs with commensurable
edges only.

A graph is called a bead graph if it does not contain incident cycles,
\textit{i.e.}, there is no vertex that belongs to two different cycles.
It is shown below that, among $\mathit{LSTDP}(E)$, there is a bead
$\mathit{LSTDP}$-system with one dynamic point at one of its terminal
vertices and vertex degrees not greater than three.
The argument is done in three steps.
The first claim is that, for any $E$, there is a bead $\mathit{DP}$-system
in set $\mathit{LSTDP}(E)$.

\begin{lemma}
Let $E$ be a given set of edges with a length function $l$ on $E$.
Set $\mathit{LSTDP}(E)$ contains a bead $\mathit{LSTDP}$-system $\Gamma$
with a dynamic point at a vertex.
\label{thm:DP_bead}
\end{lemma}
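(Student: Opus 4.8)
The plan is to start from any system in $\mathit{LSTDP}(E)$ and reshape its graph, one incident-cycle configuration at a time, until no vertex lies on two cycles, while never letting the stabilization time fall below its maximal value $T$. Throughout, I exploit that only shortest walks of a class saturate a point-place, so the stabilization time equals the length of an $\mathit{LST}$-walk.

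First I would normalize the starting system. Let $T$ be the largest stabilization time attainable by any $\mathit{DP}$-system built from $E$, realized by some system with an $\mathit{LST}$-walk $w$ of length $T$ issuing from a vertex $v_0$. Restricting that system to the single initial point $v_0$ cannot raise $t_s$ above the global maximum $T$; and it still contains $w$, which remains shortest in its class (deleting sources only removes competitor walks), so the class of $w$ on the stabilizing edge $e_s$ is still saturated exactly at time $l(w)=T$. Hence the one-point system at $v_0$ already lies in $\mathit{LSTDP}(E)$, and I may assume a single dynamic point at $v_0$, a fixed $\mathit{LST}$-walk $w$, and $t_s=T$.

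Next I would set up the reduction measure and the easy operation. Let $\mu(\Gamma)$ count the vertices belonging to two distinct cycles, so that $\Gamma$ is a bead graph iff $\mu(\Gamma)=0$. Suppose $\mu(\Gamma)>0$, witnessed by a vertex $v$ on cycles $C_1,C_2$. If some edge of $C_1$ (or of $C_2$) is traversed by $w$ in neither direction, I cut that cycle there in the sense of Lemma~\ref{lemma:slowtree}: the split destroys the cycle and strictly lowers $\mu$, while leaving every arc of $w$ intact, so $w$ survives. By Lemma~\ref{lemma:slowtree} the cut cannot raise the growth rate, so $t_s$ cannot exceed $T$; and since the class of $w$ is still saturated at time $T$, the new system again realizes $T$ and stays in $\mathit{LSTDP}(E)$.

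The hard case, and the main obstacle, is when $w$ uses every edge of both $C_1$ and $C_2$, so that no cut can preserve it; indeed an elementary numerical-semigroup example (two loops of lengths $4$ and $6$ at a common vertex, with $2\,l(e_s)=12$) shows that an $\mathit{LST}$-walk may be forced to wind around two incident cycles in order to land in a residue class modulo $2\,l(e_s)$ that neither cycle alone can produce. Here I would replace cutting by a \emph{cycle-separation} move: re-splice the arcs meeting at $v$ so that $C_1$ and $C_2$ become joined in series along an already-present path instead of sharing $v$, turning the figure-eight at $v$ into two beads on a string while preserving the edge multiset $E$ and the single source. The point is that collapsing the new connecting path back onto $v$ is a length-non-increasing map from the reshaped graph onto a quotient of $\Gamma$, so no walk becomes shorter under the move; consequently the minimal walk in each class does not decrease, no point-place is saturated earlier, and $t_s$ does not drop below $T$. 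Together with $t_s\le T$ from global maximality, the separated system again realizes $T$, and $\mu$ strictly decreases. The delicate step is to show that this re-splicing can always be performed using only the edges of $E$, keeps the system connected with its one point at $v_0$, and introduces no shortcut to any point-place; this is where the bulk of the argument will go.

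Finally, since every operation strictly decreases the nonnegative integer $\mu$ while keeping the system in $\mathit{LSTDP}(E)$ with its point at $v_0$, the process halts at a system with $\mu=0$, i.e.\ a bead graph. This bead $\mathit{LSTDP}$-system, with a dynamic point at the vertex $v_0$, is the required member of $\mathit{LSTDP}(E)$.
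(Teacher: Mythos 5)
Your normalization step (restricting to a single initial point at the source $v_0$ of a fixed $\mathit{LST}$-walk $w$) matches the paper, and your ``easy case'' is sound precisely because the cut preserves $w$, which remains shortest in its class, so saturation at time $T$ still occurs. But the hard case --- which you correctly identify as the crux --- is where the proof fails, and it fails on principle, not merely on deferred detail. Your lower bound $t_s \ge T$ for the cycle-separation move rests on the inference ``no walk becomes shorter $\Rightarrow$ no point-place is saturated earlier $\Rightarrow$ $t_s$ does not drop below $T$.'' That inference is false, and the paper's own Section~3 exists to refute it: cutting a cycle never creates new walks and never shortens any walk (every walk of the cut graph is realizable in the original), yet in Figure~\ref{fig:MG_stabtime} the stabilization time drops from $4$ to $3$. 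The reason is that $t_s$ is the time of the \emph{last} saturation event; a surgery that removes or lengthens walks can annihilate the very equivalence classes whose shortest walks realize $T$, after which stabilization no longer waits for them. So to conclude $t_s \ge T$ you must exhibit, in the reshaped graph, a walk of length exactly $T$ that is still shortest in its class, i.e.\ you must show the re-splicing \emph{preserves} the $\mathit{LST}$-walk $w$ or a length-$T$ substitute; you never establish this, and in the hard case $w$ winds through the figure-eight at $v$, so its survival is exactly what is in doubt. Moreover the move itself is not well defined: two cycles sharing only the vertex $v$ cannot be put ``in series along an already-present path'' without a connecting edge that $E$ need not supply, and you explicitly defer this construction. (A smaller issue: destroying one cycle need not strictly decrease $\mu$ when a vertex lies on three or more cycles, so even termination needs a separate argument.)

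For contrast, the paper's proof is organized so that walk preservation is automatic rather than something to be checked after the fact: it reorders the fixed $\mathit{LST}$-walk $w_s$ into a \emph{greedy} semieulerian walk on its multisupport (a Fleury-type argument), classifies each edge as odd or even by the parity of its multiplicity in that walk, cuts even edges off cycles (which the greedy structure permits without breaking the walk), splits high-degree vertices of the remaining Eulerian components of odd edges so that each such component becomes a single cycle --- note that incident cycles fully used by the walk are \emph{merged into one cycle}, not separated into ``beads on a string'' --- and then reconstructs a walk $w''_s$ of the same length $T$ in the resulting graph $\Gamma''$, which is a bead graph by construction. Since $w''_s$ has length $T$ and $T$ is the global maximum, $\mathcal{P}_{\Gamma''}\in\mathit{LSTDP}(E)$. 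Your local, inductive strategy could conceivably be repaired, but only by supplying precisely this kind of walk-preserving surgery for a pair of incident cycles both of which are fully used by $w$; as written, the proposal has a genuine gap at its central step.
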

\begin{proof}
By scaling, \textit{i.e.}, dividing by $gcd$, w.l.o.g.,
the lengths of all edges in $E$ can be made integer with $gcd(E)$ equal to $1$.

%Proof of one point
The points generated by dynamic point $p$ reached a vertex are called
descendants of $p$, and $p$ is the ancestor of the new points;
descendants and ancestor are transitive notions, \textit{i.e.}, descendants of
descendants of $p$ are descendants of $p$.
The stabilization time for an edge $e$ is denoted by $t_s(e)$, \textit{i.e.},
$t_s(e)$ is the earliest time point such that
$\forall t> t_s(e): N_e(t) = N_e(t_s(e))$.
$N_e(t)$ stabilizes when edge $e$ receives $N_e(t_s(e))$ points.
Assume that $e_s$ is the edge with the longest stabilization time $t_s$
in $\zPcal_\Gamma$.
Now consider the last point $p_s$ that is generated on $e_s$ at stabilization
time $t_s(e)$.
Point $p_s$ is the descendant of an initial point $p_0$ in $\zPcal_\Gamma$.
Elimination of all initial points except $p_0$ in $\zPcal_\Gamma$
does not decrease $t_s(\zPcal_\Gamma)$.
Thus, we may consider only $\mathit{DP}$-systems with one dynamic point.

%%%%%%Let $\mathit{LSTDP}(E)$ contains $\zPcal_\Gamma$ consisting of graph
%%%%%%$\Gamma$ and a dynamic point in vertex $v_0$,
%%%%%%and $\Gamma$ is not acyclic.
%%%%%%The proof is based on an observation that there is a cutting of loops
%%%%%%in $\Gamma$ that does not decrease $t_s(\zPcal_\Gamma)$.

Let $\zPcal_\Gamma$ be a $\mathit{DP}$-system on graph $\Gamma$
in set $\mathit{LSTDP}(E)$ with point $p$ in $v_0$.
Let $e_s$ be the edge with the longest stabilization time $t_s$ in
$\zPcal_\Gamma$, and let $w_s \in {\zWcal}(v_0,e_s)$ be an $\mathit{LST}$-walk.

It is clear that, for any walk $w$ from vertex $u$ to $v$, it is possible to
obtain a new walk $w'$ with the same length by reordering arc entries in $w$,
such that, when any entry of arc $a_p=\langle v_i,v_j \rangle$,
that corresponds to edge $e_p$, is met in $w$ for the first time,
walk $w'$ runs back and forth on $e_p$ (do alternating series of steps
$\bar{a}_p$ and $a_p$) until there are no more entries of arcs $\bar{a}_p$
or $a_p$ left in $w$,
except probably one last entry to preserve ability to reach $u$.
Intuitively, consider multisupport $\overline{S}(w)$,
which is semieulerian by construction,
and apply `greedy' modification of
Fleury algorithm~\cite{Fleischner1990_EulerianGraphs_ChapX},
that always chooses just passed edge again if it may,
to find a semieulerian walk from $u$ to $v$ on $\overline{S}(w)$;
it is possible as Fleury algorithm is path-choice agnostic
unless the edge is a bridge.
For example, for walk $w= a_2, a_3, \bar{a}_3, \bar{a}_2, a_2, a_3, a_4$,
there is a walk $w'= a_2, \bar{a}_2, a_2, a_3, \bar{a}_3, a_3, a_4$;
we will call such reordered walks \emph{greedy}.

For a walk $w$, if the number of arcs in $\overline{S}(w)$ corresponding to edge
$e$ in $\Gamma$ is odd (even),
we will call edge $e$ in $\Gamma$ and the corresponding arcs in $\overline{S}(w)$
\emph{odd} (\emph{even}).
%%%%While constructing walk $w'_s$ from $w_s$, for reasons that will become clearer
%%%%below, we will always choose to traverse even edges before odd.

Let $w'_s \in {\zWcal}(v_0,e_s)$ be the greedy version of $\mathit{LST}$-walk
$w_s$,
and, obviously, $w'_s$ is an $\mathit{LST}$-walk in ${\zWcal}(v_0,e_s)$ itself.
Note that if, while traversing $\overline{S}(w'_s)$ according to $w'_s$,
we reached arcs corresponding
to an edge $e$ of $\Gamma$ which multiplicity is larger than $2$,
then, by the greedy Fleury procedure, we always move back and forth on these arcs,
except the case when the last arc becomes a bridge in $\overline{S}(w'_s)$
during such traversing.
Thus, while analysing the structure of a semieulerian walk for graph cutting
purposes, we may temporary omit all
`parasite' pairs of arcs in $\overline{S}(w'_s)$ leaving only one arc for odd and
two arcs for even edges;
%(\textit{i.e.}, for $k$ arcs, we keep $2^{k+1(\mathrm{mod}\;{2})}$ of them);
such a multisupport is called \emph{reduced} and
is denoted by $\widehat{S}(w'_s)$.
To preserve $w'_s$ length, we need to restore such omitted arcs after
graph cuttings.
%(\textit{i.e.}, for $k$ parity we leave $2^{(k+1)\;\mathrm{mod}\ 2}$ arcs)

\begin{figure}[!ht]
    \begin{center}
        \centerline{\includegraphics[scale=1]{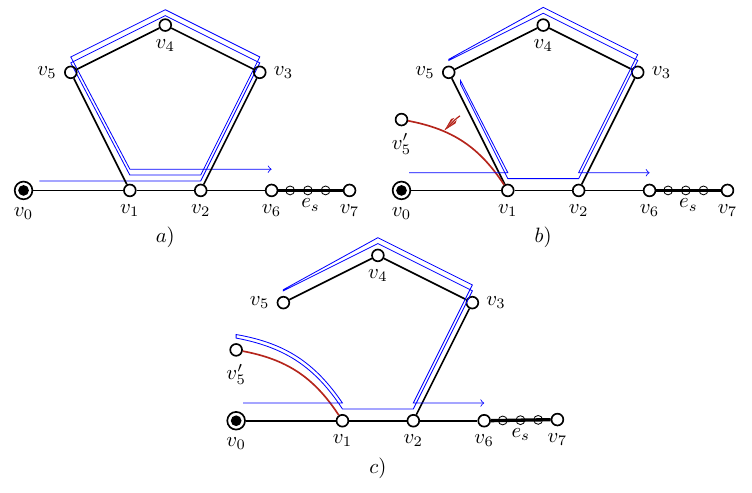}}
        %\centerline{\includegraphics[scale=1]{Fig5.pdf}}
        \caption{a) Semieulerian $v_0v_6$-walk, b) its greedy version,
            and c) graph $\Gamma$ after cutting ---
            edge $\langle v_1, v_5 \rangle$ is split from $v_5$}
        \label{fig:cut1_greedy}
    \end{center}
\end{figure}

As we will see now, greedy semieulerian walks have simpler
structure than arbitrary ones.  
At first, we take greedy walk $w'_s$ and start traversing $\overline{S}(w'_s)$
according to $w'_s$.
If walk $w'_s$ reaches vertex $v_a$ with an incident even
edge $e=\langle v_a,v_b \rangle$,
and, after traversing back and forth all the arcs corresponding to $e$,
it ends up in $v_a$ (\textit{i.e.}, it doesn't passes beyond $v_b$),
then we split $e$ from vertex $v_b$.
Such case may happen only when all but last arcs on $e$
are traversed and the last arc is not a bridge,
\textit{i.e.}, even if we traverse-remove the last arc $\langle v_b, v_a \rangle$
corresponding to $e$ in $\overline{S}(w'_s)$,
there are still paths to the rest arcs incident to $v_b$ in the untraversed
fragment of $\overline{S}(w'_s)$;
this implies that when we first reached $e$ before splitting, it lies on a cycle in the
untraversed part of $\overline{S}(w'_s)$.
Figure~\ref{fig:cut1_greedy} provides an example of such cutting.
We have walk $w_s=v_0,v_1,v_2,v_3,v_4,v_5,v_1,v_2,v_3,v_4,v_5,v_1,v_2,v_6$,
its greedy version $w'_s=v_0,v_1,v_5,v_1,v_2,v_1,v_2,v_3,v_4,v_5,v_4,v_3,v_2,v_6$,
and a cutting of edge $e=\langle v_1,v_5 \rangle$ from vertex $v_5$ as $w'_s$
reaches but does not cross (walk beyond) $v_5$ from $e$.
After all cuttings, we obtain resultant graph $\Gamma'$.
Obviously, the cutting procedure preserve $w'_s$; \textit{i.e.}, $w'_s$ is realizable
in $\Gamma'$.
All cycles that contain even edges are cut, \textit{i.e.},
there are not cycles with even edges in $\Gamma'$.
Thus, even edges will correspond to bridges in $\Gamma'$ after cutting.

Note that we do cuttings 
on $\overline{S}(w'_s)$ and the original graph simultaneously,
even we are focused only on $\overline{S}(w'_s)$ during the procedure.
Multisupport $\overline{S}(w'_s)$ plays auxiliary role to let us see how to cut the
original graph preserving the $\mathit{LST}$-walk $w'_s$.

\begin{figure}[!ht]
    \begin{center}
        \centerline{\includegraphics[scale=1]{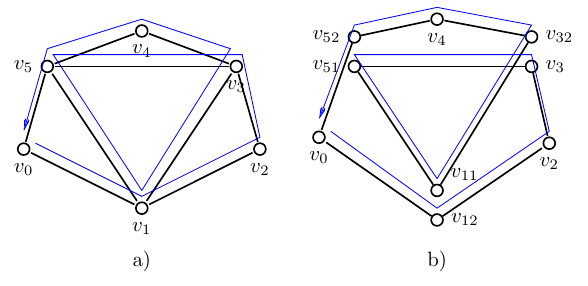}}
        %\centerline{\includegraphics[scale=1]{Fig6.pdf}}
        \caption{a) Walk $w_{cc}$ in component $CC$ of $\overline{S}_{mod2}(w'_s)$,
            and b) the resulting component after all cuttings conducted}
        \label{fig:cut2_graphmod2}
    \end{center}
\end{figure}

Now, we take $\Gamma'$ after cutting and $\overline{S}_{\Gamma'}(w'_s)$,
and, if there is an edge $e$ in $\Gamma'$ that corresponds to two or more arcs
in $\overline{S}_{\Gamma'}(w'_s)$, then we remove a pair of arcs that corresponds to $e$
and repeat removing while we can; the result is denoted by
$\overline{S}_{mod2}(w'_s)$.
Clearly, when the process ends, all arcs corresponding to even edges
will be removed, and, for each odd edge, only one arc is kept.
As parity has not changed, all components of odd edges are eulerian,
except one that corresponds to a semieulerian $v_0e_s$-tour.
For each eulerian connected component $CC$ in $\overline{S}_{mod2}(w'_s)$,
we fix eulerian walk $w_{cc}$ within $CC$.
For each vertex $v_i$ of $w_{cc}$ with degree $d(v_i)$ more than two,
\textit{i.e.}, $w_{cc}$ crosses vertex $v_i$ exactly $d(v_i)/2$ times,
we split $v_i$ into $d(v_i)/2$ vertices and make $w_{cc}$ pass through
different copies of $v_i$, thus, not crossing any such vertex $v_i$ more than once.
Therefore, by splitting and adding new vertices, we transform $CC$ into one cycle.
For the semieulerian component, we do the same process but the result is
a linear graph.
After cutting, we revive all deleted pairs of arcs on edges, that were removed
in $\overline{S}_{mod2}(w'_s)$, and restore path $w'_s$ in the resulting graph;
it does not introduce any difficulties as, despite we added new vertices,
the set of edges is the same --- for example, edge $\langle v_1,v_3 \rangle$
in Figure~\ref{fig:cut2_graphmod2}a) corresponds to edge
$\langle v_{11},v_{32} \rangle$ of the resulting graph in
Figure~\ref{fig:cut2_graphmod2}b).
For vertex $v_i$ in $\Gamma'$, there can be several vertices in $\Gamma''$;
for example, in Figure~\ref{fig:cut2_graphmod2}, for vertex $v_1$ in $\Gamma'$,
there are vertices $v_{11}$ and $v_{12}$ in $\Gamma''$.
While we reconstructing even edges that are incident to $v_i$ in $\Gamma'$,
it is irrelevant which of new vertices corresponding to $v_i$ in $\Gamma''$
we choose as our only focus is to preserve $w'_s$, even less ---
a path of the same length.

The resulting graph is denoted by $\Gamma''$ and path $w'_s$ reconstructed in
$\Gamma''$ is denoted by $w''_s$.
Graph $\Gamma''$ consists of a linear subgraph and a number of cycles
corresponding to odd edges,
that are connected with bridges corresponding to even edges;
to easier imagine it --- if each cycle of odd edges is contracted to a vertex,
the resulting graph is a tree of even edges.
By construction, $\Gamma''$ is a bead graph; all contacting cycles of odd 
edges are merged into one cycle, and
any vertex in $\Gamma''$ has not more than two incident odd edges
with exception of $e_s$.

As $w''_s$ has the same length as $w_s$, $w''_s$ is $\mathit{LST}$-walk,
and $\zPcal_{\Gamma''}$ is an $\mathit{LSTDP}$-system in $\mathit{LSTDP}(E)$.

\end{proof}

\begin{figure}[!ht]
    \begin{center}
        \centerline{\includegraphics[scale=1]{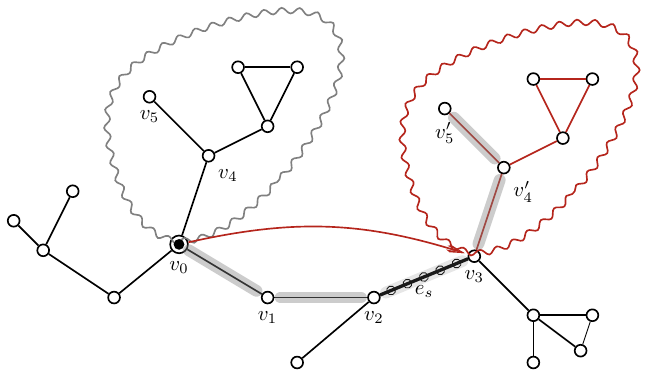}}
        %\centerline{\includegraphics[scale=1]{Fig7.pdf}}
        \caption{Moving the graph fragment from vertex $v_0$ to $v_3$}
        \label{fig:NP_movebeyond}
    \end{center}
\end{figure}

\begin{figure}[!ht]
    \begin{center}
        \centerline{\includegraphics[scale=1]{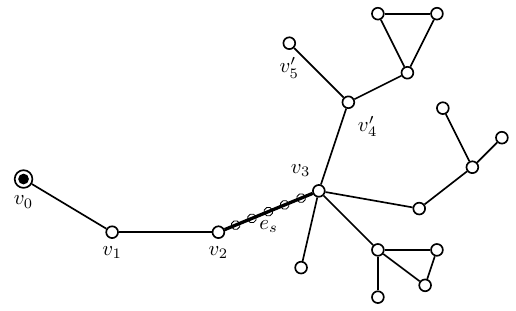}}
        %\centerline{\includegraphics[scale=1]{Fig8.pdf}}
        \caption{Resulting bead broom metric graph}
        \label{fig:NP_beadbroom}
    \end{center}
\end{figure}

A graph is called a bead broom graph if it is a bead graph with a fixed
connected (linear) subgraph --- a handle; all vertices of the handle
have at most degree two, except one of its terminal vertices.
For example, in Figure~\ref{fig:NP_beadbroom}, the graph is a broom graph
with a handle $v_0, v_1, v_2, v_3$.

\begin{lemma}
Let $E$ be a given set of edges with a length function $l$ on $E$.
If there is a bead $\mathit{LSTDP}$-system $\Gamma$ in $\mathit{LSTDP}(E)$
with a dynamic point at vertex $v_0$,
then $\mathit{LSTDP}(E)$ contains a bead broom $\mathit{LSTDP}$-system $\Gamma'$
with a point at the end of the handle of $\Gamma'$,
and the handle contains edge $e_s$ with the longest stabilization time.
\label{thm:DP_beadbroom}
\end{lemma}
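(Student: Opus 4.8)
The plan is to re-use the structure exposed in the proof of Lemma~\ref{thm:DP_bead}. Applying the greedy $\mathit{LST}$-walk reduction of that lemma to the given bead system, I may assume that $\Gamma$ already consists of a linear \emph{backbone} --- the semieulerian $v_0e_s$-subgraph, a simple path $v_0 = x_0, x_1, \dots, x_p$ whose terminal edge is $e_s = \{x_{p-1}, x_p\}$ --- together with cyclic \emph{bristles} (maximal subtrees-of-cycles) hung onto the backbone through even-edge bridges, the single point sitting at $v_0$ and $t_s = l(w_s)$ for a greedy $\mathit{LST}$-walk $w_s$. Since $\zPcal_\Gamma \in \mathit{LSTDP}(E)$, the value $t_s$ is the largest stabilization time attainable by any $\mathit{DP}$-system built from $E$; hence any rearrangement of $\Gamma$ that keeps a realizable walk of length $t_s$ reaching $e_s$, and introduces no shorter walk into the critical class of $e_s$, stays in $\mathit{LSTDP}(E)$ automatically. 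The whole argument then reduces to pushing every bristle onto the single terminal $x_p$, so that the backbone becomes a clean handle with $v_0$ as its free end and $e_s$ inside it.

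The basic move I would use is \emph{relocation of a bristle} attached at some $x_i$ with $i<p$ to the terminal $x_p$, that is, reattaching the bridge joining it to the backbone from $x_i$ to $x_p$ (Figure~\ref{fig:NP_movebeyond}), and I would establish its safety in three points. First, relocation preserves the multiset of edges, so $\Gamma'$ is again constructed from $E$. Second, it creates no shorter walk to any point-place: a walk entering the relocated bristle is only lengthened, since it must first run along the handle from $x_i$ to $x_p$; and a walk that merely detours through the bristle on its way to $e_s$ keeps its old length, because the handle segment $v_0\!\to\!x_i\!\to\!x_p$ is as long as $v_0\!\to\!x_p$, and the segment $x_i\!\to\!e_s$ is as long as $x_i\!\to\!x_p\!\to\!e_s$, so the relocated detour is length-neutral. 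Hence no point-place of $\Gamma'$ is saturated earlier than the corresponding one of $\Gamma$, and the critical class of $e_s$ still has minimal length $t_s$. Third, every bridge is an \emph{even} edge, so moving it alters the multisupport-degrees of $x_i$ and $x_p$ only by even amounts; the two odd-degree vertices remain $v_0$ and the far endpoint of $e_s$, and a semieulerian $v_0e_s$-walk of length $t_s$ --- a new $\mathit{LST}$-walk $w_s'$ --- can be reconstructed exactly as in Lemma~\ref{thm:DP_bead}. Together with the maximality of $t_s$, this pins the stabilization time of $\Gamma'$ at precisely $t_s$, with $e_s$ still of longest stabilization.

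Carrying out this relocation for every bristle attached at $x_0, \dots, x_{p-1}$ moves all of them onto $x_p$. Afterwards each internal handle vertex has degree two and $v_0$ has degree one, so $x_0,\dots,x_p$ is a genuine handle; the point remains at its free end $v_0$ and $e_s=\{x_{p-1},x_p\}$ lies on it. The bead property survives because the bristles are still joined to $x_p$ through their own bridges, so no two cycles come to share a vertex, and the only exceptional higher-degree vertex is the terminal $x_p$ --- exactly the broom head (Figure~\ref{fig:NP_beadbroom}). Hence $\Gamma'$ is a bead broom $\mathit{LSTDP}$-system of the required form.

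The main obstacle is the stabilization-preservation step, and within it the claim that relocation introduces no shorter walk into the critical class of $e_s$; the length identity above is the crux, after which maximality of $t_s$ clamps the value from above and forbids any accidental increase. A secondary point to dispatch carefully is that $e_s$ must remain on the handle rather than being swept into a relocated bristle --- this I would guarantee by keeping $e_s$ as the terminal backbone edge and relocating bristles \emph{toward} it --- together with the degenerate situations in which $v_0$ or $e_s$ happens to sit on a cycle of the original bead graph, which are absorbed by first extracting the linear backbone through the greedy-walk reduction of Lemma~\ref{thm:DP_bead}.
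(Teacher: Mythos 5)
Your proposal matches the paper's own proof essentially step for step: you invoke the reduction of Lemma~\ref{thm:DP_bead} to get the linear $v_0e_s$-backbone with cycles attached through even bridges, relocate each hanging subgraph from its handle vertex to the far endpoint of $e_s$ (exactly the paper's move of $g$ from $v_a$ to $v_q$), and justify that the result stays in $\mathit{LSTDP}(E)$ by the same length-preserving detour correspondence showing that no class of $\zWcal(v_0,e_s)$ acquires a shorter walk, with maximality of $t_s$ clamping the stabilization time from above. Your one addition --- the parity/semieulerian reconstruction of an $\mathit{LST}$-walk in the modified graph --- is an explicit treatment of the class-survival point that the paper handles only implicitly, so the two arguments are in substance the same.
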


\begin{proof}
Let $e_s$ be an edge with longest stabilization time.
We apply to $\Gamma$ the procedure described in the proof of
Lemma~\ref{thm:DP_bead},
and, after cutting, $\Gamma$ consists of a linear $v_0e_s$-subgraph,
odd cycles, and even bridges.

Let $e_s = \langle v_r, v_q \rangle$,
where $v_r$ is the vertex that is closer to $v_0$,
\textit{i.e.}, $v_q$ does not lie on the shortest path $h$ from $v_0$ to $v_r$.
Path $h$ is unique as $v_0$ and $v_r$ belong to the linear subgraph of odd edges
in $\Gamma$ constructed at the previous phase;
$h$ will become the handle of the resulting bead broom metric graph.

Consider now an arbitrary even edge $e_b = \langle v_a,v_b \rangle$ such that
$v_a$ incident to $h$.
Let subgraph $g$ of $\Gamma$ be the connected component containing
$v_b$ that appears if we remove bridge $e_b$
(even $e_b$ is a bridge as was shown in the the proof of
Lemma~\ref{thm:DP_bead}).
For example, on the left of Figure~\ref{fig:NP_movebeyond},
such a subgraph containing vertices $v_4$ and $v_5$ is outlined by a grey
wavy line.
Vertex $v_4$ belongs to $g$ and is adjacent to $v_0$.

New graph $\Gamma'$ is obtained by disconnecting $g$ from vertex $v_a$
in $\Gamma$ and connecting $g$ to $v_q$, \textit{i.e.},
$e_b$ is removed from $\Gamma$,
and a new edge $\langle v_b, v_q\rangle$ of length $l(e_b)$ is added.
The moved subgraph $g$ with the new edge in $\Gamma'$ will be denoted by $g'$,
and each vertex $v$ and each edge $e$ in $g$ will be denoted by $v'$ and $e'$,
correspondingly, in $\Gamma'$.
In Figure~\ref{fig:NP_movebeyond}, $g$ is moved from $v_0$ to $v_3$.
%Let set $W(v_0, e_s)$ be the union of sets $W(v_0, v_2)$ and $W(v_0, v_3)$.
Consider set $\zWcal(v_0, e_s)$ in $\Gamma$ and set $\zWcal(v_0, e_s)$ in $\Gamma'$.
It is needed to ensure that, after such surgery of $\Gamma$,
no new walks from $v_0$ to the endpoints of $e_s$
that decrease $t_s(e_s)$ appeared in $\Gamma'$.

Every walk $w'$ in $\zWcal(v_0, v_q)$ of $\Gamma'$ that has no edges of $g'$
is, clearly, in $\zWcal(v_0, e_s)$ of $\Gamma$.
Let $w'$ be a walk in $\zWcal(v_0, e_s)$ of $\Gamma'$ that has edges of $g'$
in its support $S(w')$.
For example,
consider path $w'=\langle v_0, v_1, v_2, v_3, v_4', v_5', v_4', v_3 \rangle$
in Figure~\ref{fig:NP_movebeyond};
its support is highlighted with gray.
For walk $w'$, there is a walk $w$ in $\Gamma$ that is not longer than $w'$.
Walk $w$ contains all edges of $w'$ that do not belong to $g'$;
in addition, for each edge $\langle v_i', v_j' \rangle$ of $w'$ in $g'$,
$w$ contains corresponding edge $\langle v_i, v_j \rangle$ in $w$.
For $\langle v_q, v_b' \rangle$ in $w'$, $w$ contains $e_b$ .
The lengths of such $w'$ and $w$ are equal;
and, thus, $w'$ and $w$ lie in the same class $[w']_{\approx}$.
Point $p$ moving along $w$ from $v_0$ to $v_q$ in $\Gamma$ can even induce a new point
on $e_s$ earlier than moving along $w'$ in $\Gamma'$ if $w$ does not contain edges of
$\Gamma$ beyond endpoint $v_q$ of $e_s$,
\textit{i.e.}, $p$ produces a new point on $e_s$ in $l(e_s)$ time units earlier.
Thus, for every walk $w'$ in $\Gamma'$, there is a walk in $\Gamma$ that
produces a point on $e_s$ not later than $w'$.
As the result, $e_s$ will be saturated in $\zPcal_{\Gamma'}$ not earlier than in
$\zPcal_\Gamma$, and $\zPcal_{\Gamma'}$ belongs to $\mathit{LSTDP}(E)$.%
\end{proof}

\begin{figure}[!ht]
    \begin{center}
        \centerline{\includegraphics[scale=1]{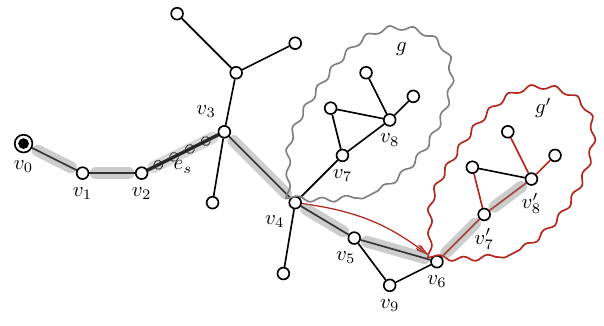}}
        %\centerline{\includegraphics[scale=1]{Fig9.pdf}}
        \caption{Relocation of subgraph $g$ from $v_4$ to $v_6$}
        \label{fig:th2_linearize_broom}
    \end{center}
\end{figure}

Eventually, we will show that vertex degrees of $\mathit{LSTDP}$-system can be
reduced to three or less without leaving $\mathit{LSTDP}(E)$ set.

\begin{lemma}
Let $E$ be a given set of edges with a length function $l$ on $E$.
If there is a bead broom $\mathit{DP}$-system $\zPcal_\Gamma$ in $\mathit{LSTDP}(E)$
with a handle $h$,
an initial point at the end of the handle terminal vertex $v_0$,
and $h$ contains an edge $e_s$ with longest stabilization time,
then $\mathit{LSTDP}(E)$ contains a bead $\mathit{DP}$-system $\zPcal_\Gamma'$
with an initial point at one of its terminal vertices and
maximum degree of 3.
\label{thm:beadthree}
\end{lemma}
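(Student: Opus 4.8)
The plan is to take the bead broom system from the hypothesis --- handle $h$ running from the terminal initial vertex $v_0$ through $e_s=\langle v_r,v_q\rangle$, with every remaining subgraph (the broom bristles) attached at $v_q$ and beyond by even bridges and forming a bead structure of odd cycles --- and to drive the maximum degree down to three by iterated subgraph relocation. Since $h$ is linear, any vertex $v$ of degree at least four lies in the bristle part. As $\Gamma$ is a bead graph, $v$ carries at most two cycle edges, and at most one further ``structural'' edge (the handle edge $e_s$, and only at $v_q$); all other incident edges are even bridges, each lying on no cycle, so removing one splits off a well-defined hanging subgraph. Rooting the bead-tree at $v_0$, the edges toward $v_0$ are retained, leaving at least $\deg(v)-3\ge 1$ detachable ``child'' bridges that may be moved.

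The single operation is the relocation surgery of Lemma~\ref{thm:DP_beadbroom}: detach a child subgraph $g$ together with its bridge $e_b$ from $v$ and reattach it, with all lengths unchanged, at a vertex $v'$ lying strictly beyond $v$ in the rooted bead-tree --- for instance the far end of a sibling child of $v$, or of the chain already being grown there --- chosen so that $v$ separates $v'$ from $v_0$. Then every walk $w'\in\zWcal(v_0,e_s)$ of $\Gamma'$ that reaches $v'$ must pass through $v$; replacing its $g$-excursion by the identical-length excursion through $e_b$ and $g$ reseated at $v$ produces a walk $w$ in $\Gamma$ of equal length reaching the same endpoint of $e_s$, so $w\approx w'$. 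Exactly as in Lemma~\ref{thm:DP_beadbroom}, no point-place of $e_s$ is therefore saturated earlier in $\Gamma'$, while the $\mathit{LST}$-walk survives; hence $t_s(e_s)$ is unchanged and $\zPcal_{\Gamma'}\in\mathit{LSTDP}(E)$.

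Relocating one hanging component by a single bridge never places a vertex on two cycles, so $\Gamma'$ remains a bead graph, and the initial point stays at the degree-one terminal $v_0$. Each surgery lowers $\deg(v)$ by one and raises only a target whose degree was at most two to at most three, and every cycle vertex is left with at most one bridge; thus no new degree-$\ge 4$ vertex is created, and the potential $\sum_u\max(\deg(u)-3,\,0)$ strictly decreases. After finitely many steps the maximum degree is three, giving the required bead $\mathit{DP}$-system in $\mathit{LSTDP}(E)$ with its initial point at a terminal vertex and maximum degree three.

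I expect the crux to be the $\mathit{LSTDP}$-invariance of a single relocation, and specifically the choice of the target $v'$: the equal-length matching of $\Gamma'$-walks with $\Gamma$-walks only goes through when $v$ separates $v'$ from $v_0$, so that every walk reaching the relocated $g$ necessarily revisits $v$ and its excursion can be reseated there without changing length or the endpoint reached on $e_s$. The awkward case is a vertex carrying cycle edges with only a single spare bridge and no tree-leaf available beyond a cut vertex (purely cyclic bristles): here one is forced to reattach onto a cycle, where $v$ is not a separating vertex and where moving $e_b$ flips the parities at two cycle vertices, so one must supply a parity-and-connectivity argument for the reseated semieulerian walk via the greedy Fleury reordering of Lemmas~\ref{thm:DP_bead} and~\ref{thm:DP_beadbroom}, or else sequence the relocations so that leaves are created first and consumed as targets later. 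By comparison, the degree bookkeeping that keeps every target at or below degree three is routine.
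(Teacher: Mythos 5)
Your proposal is correct and follows essentially the same route as the paper's own proof: the paper likewise iterates the relocation surgery of Lemma~\ref{thm:DP_beadbroom}, detaching each surplus even bridge at a degree-$\ge 4$ vertex $v$ and reattaching its hanging component, with lengths unchanged, at a \emph{bead leaf} reached from $v_q$ by a path through $v$ (the same role as your separated target $v'$), then repeats the equal-length walk-reseating argument to show no class of $\zWcal(v_0,e_s)$ saturates earlier, so the system stays in $\mathit{LSTDP}(E)$, and finishes with the same finite degree bookkeeping. The ``awkward case'' you flag in your last paragraph does not actually arise: whenever the target lies in a component attached to $v$ across a bridge (or, when $v$ is the entry vertex of its cycle, on or beyond that cycle), every walk from $v_0$ to it must cross a bridge incident to $v$, so $v$ separates it from $v_0$ even if the target sits on a cycle; hence your own construction already covers purely cyclic bristles, and no Fleury/parity supplement or re-sequencing of the relocations is needed.
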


\begin{proof}
Let $e_s = \langle v_r, v_q \rangle$,
where $v_q$ is the vertex that is farther from $v_0$ than $v_r$.

If a vertex $v$ is a leaf or $v$ belongs to a cycle $C$ in $\Gamma$ that has
only one incident bridge (even edge) $e_b$, \textit{i.e.}, $C$ is a terminal cycle,
and $v$ is not incident to $e_b$, then $v$ is called a \emph{bead leaf}
vertex.

Consider a path $p_l$ from $v_q$ to a bead leaf $v_l$ of $\Gamma$ that does not
run through $e_s$,
and an arbitrary even edge $e_b$ of $\Gamma$ that is incident to a vertex $v_a$ of
$p_l$. Let subgraph $g$ be the connected component that occurs if we remove
$e_b$, thus that does not contain $p_l$.
Let $v_b$ be another terminal vertex of $e_b$ adjacent to $v_a$.
New graph $\Gamma'$ is obtained by disconnecting $g$ from vertex $v_a$
in $\Gamma$ and connecting $g$ to $v_l$,
\textit{i.e.}, edge $e_b$ is removed and a new edge $\langle v_b, v_l\rangle$
of length $l(e_b)$ is added.
The moved subgraph $g$ with the new edge in $\Gamma'$ will be denoted by $g'$,
and each vertex $v$ and each edge $e$ in $g$ will be denoted by $v'$ and $e'$,
correspondingly, in $\Gamma'$.

For example, in Figure~\ref{fig:th2_linearize_broom}, there is path
$w_l = \langle v_3, v_4, v_5, v_6 \rangle$ from $v_3$ to bead leaf $v_6$.
Subgraph $g$ containing vertices $v_7$ and $v_8$ is outlined by a
grey wavy line and is incident to vertex $v_4$ of the path.
Graph $\Gamma'$ is obtained from $\Gamma$ by moving $g$ from $v_4$
to $v_6$.

The second part of the proof argument mostly resembles thus
of the previous theorem.
Consider $\zWcal(v_0, e_s)$ in $\Gamma$ and set $\zWcal(v_0, e_s)$ in $\Gamma'$.
It is needed to ensure that no new walks from $v_0$ to the endpoints of $e_s$
appear in $\Gamma'$ that may decrease $t_s(e_s)$ in $\Gamma'$.

Each walk $w'$ in $\zWcal(v_0, e_s)$ of $\Gamma'$ that has no edges of $g'$
is in $\zWcal(v_0, e_s)$ of $\Gamma$.
Let $w'$ be a walk in $\zWcal(v_0, e_s)$ of $\Gamma'$ that has edges of $g'$
in its support $S(w')$.
For walk $w'$, there is a walk $w$ in $\Gamma$ that is not longer than
$w'$.
Walk $w$ contains all edges of $w'$ that do not belong to $g'$;
in addition, for each edge $\langle v_i', v_j' \rangle$ of $w'$ in $t'$,
$w$ contains corresponding edge $\langle v_i, v_j \rangle$ in $w$.
The lengths of such $w'$ and $w$ are equal and, thus, $w'$ and $w$
lie in the same class $[w']_{\approx}$.
For example,
in Figure~\ref{fig:th2_linearize_broom},
consider walk
$w'=\langle v_0, v_1, v_2, v_3, v_4, v_5, v_6, v_7', v_8',$ $v_7',
v_6, v_5, v_4, v_3 \rangle$;
its support is highlighted with gray.
In $\Gamma$, there is a corresponding walk
$w=\langle v_0, v_1, v_2, v_3, v_4, v_7, v_8, v_7, v_4,$ $v_5, v_6, v_5, v_4, v_3
\rangle$.

Thus, for every walk $w'$ in $\Gamma'$, there is a walk in $\Gamma$ that
produces a point on $e_s$ not later than $w'$.
As the result, $e_s$ will be saturated in $\zPcal_\Gamma'$ not earlier than in
$\zPcal_\Gamma$; therefore, $\zPcal_\Gamma'$ belongs to $\mathit{LSTDP}(E)$.

For any vertex $v$ in $\Gamma$ with four or more incident edges,
$v$ may have two odd incident edges only if $v$ lies on a cycle,
$v$ may have three odd incident edges if $v$ is a terminal vertex of $e_s$;
all other incident edges of $v$ are even.
One of even incident edges may belong to the $e_sv$-path;
for each other even incident edge we move it with its connected subgraph to
a bead leaf.
After the procedure is applied repeatedly, we obtain a bead graph with
no vertices of degree four or more.
\end{proof}

As the set of metric graphs that can be constructed from the given finite set of 
edges $E$ is finite, there is a metric graph with the longest stabilization time
built from $E$.
Lemmas~\ref{thm:DP_bead},\ref{thm:DP_beadbroom},\ref{thm:beadthree}
combined give us the following theorem.

\begin{theorem}
Let $E$ be a given set of edges with a length function $l$ on $E$.
There is a $\mathit{DP}$-system on a bead metric graph with vertices of degree three
or less and with an initial point in one of its terminal vertices
that belongs to
$\mathit{LSTDP}(E)$.
\label{coro:fin}
\end{theorem}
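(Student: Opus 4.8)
The plan is to derive the theorem by composing the three lemmas just proved, after first guaranteeing that the target set is nonempty. Because $E$ is finite, only finitely many metric graphs can be assembled from its edges; their stabilization times therefore form a finite set, the maximum is attained, and so $\mathit{LSTDP}(E) \neq \emptyset$. This furnishes at least one $\mathit{DP}$-system on which the chain of transformations can start.

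First I would apply Lemma~\ref{thm:DP_bead} to an arbitrary member of $\mathit{LSTDP}(E)$, obtaining a bead $\mathit{LSTDP}$-system carrying a single dynamic point at some vertex $v_0$; this step also collapses all initial points to one and removes incident cycles. The output of Lemma~\ref{thm:DP_bead}---a bead $\mathit{LSTDP}$-system with a dynamic point at a vertex---is exactly the hypothesis required by Lemma~\ref{thm:DP_beadbroom}, which I would invoke next to relocate the point to the terminal end of a handle $h$ while keeping the longest-stabilization edge $e_s$ on that handle. Finally, the resulting bead broom system meets the hypothesis of Lemma~\ref{thm:beadthree} verbatim---a bead broom $\mathit{DP}$-system with the initial point at the handle's terminal vertex and $e_s$ contained in the handle---so applying that lemma yields a bead $\mathit{DP}$-system of maximum degree three with its initial point at a terminal vertex. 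Since each lemma's conclusion explicitly asserts membership in $\mathit{LSTDP}(E)$, this final graph lies in $\mathit{LSTDP}(E)$ and satisfies every clause of the theorem.

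The entire substance resides in the three lemmas, so the composition itself introduces no new estimate; the only thing to check is that the interface data are threaded through correctly. Concretely, I would verify that the distinguished vertex produced by Lemma~\ref{thm:DP_bead} serves as the $v_0$ consumed by Lemma~\ref{thm:DP_beadbroom}, and that the handle, its terminal vertex, and the edge $e_s$ delivered by Lemma~\ref{thm:DP_beadbroom} coincide with the configuration expected by Lemma~\ref{thm:beadthree}. The main---indeed the only---obstacle is this bookkeeping: making sure no structural attribute (the bead property, the identity of $e_s$, the location of the single initial point) is lost or altered between successive applications, so that the chain closes and the $\mathit{LST}$-preserving property accumulates without gaps.
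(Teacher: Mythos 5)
Your proposal is correct and follows essentially the same route as the paper: the paper also notes that finiteness of the set of metric graphs constructible from $E$ guarantees $\mathit{LSTDP}(E)$ is nonempty, and then obtains the theorem by composing Lemmas~\ref{thm:DP_bead}, \ref{thm:DP_beadbroom}, and \ref{thm:beadthree} in exactly the order you describe. Your explicit attention to threading the interface data (the vertex $v_0$, the handle, and the edge $e_s$) through the chain is a sound elaboration of what the paper leaves implicit.
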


As the resulting tree $\mathit{DP}$-system $\zPcal_\Gamma$ in
Lemma~\ref{lemma:slowtree} does not contain cycles at all,
if we fix stabilization edge $e_s$ and apply
the transformations of a graph suggested in the proofs of
Lemmas~\ref{thm:DP_bead},\ref{thm:DP_beadbroom},\ref{thm:beadthree}
to $\zPcal_\Gamma$, then we obtain a linear graph $\zPcal_\Gamma'$.
As such transformations does not add paths of new lengths,
$\zPcal_\Gamma'$ will demonstrate less or equal growth rate than $\zPcal_\Gamma$.
As this can be done for any $\zPcal_\Gamma$, we can improve
Lemma~\ref{lemma:slowtree}.

\begin{corollary}
Let $E$ be a given set of edges with a length function $l$ on $E$,
and $\mathcal{P}_\Gamma$ be a dynamical system with a graph $\Gamma$
constructed from $E$.
Then, there is a dynamical system $\mathcal{P}'_{\Gamma'}$ consisting of
a linear graph $\Gamma'$ constructed from $E$ and one dynamic point
at its vertex,
such that the growth rate of $\mathcal{P}'_{\Gamma'}$ is equal or less than
thus of $\mathcal{P}_\Gamma$.
\end{corollary}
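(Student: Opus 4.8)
The plan is to prove the corollary by composing two reductions: first flatten an arbitrary system to a tree with a single point, then straighten that tree into a line using the very surgeries of the preceding lemmas, while upgrading their conclusions from a single edge to all edges. First I would apply Lemma~\ref{lemma:slowtree} to replace $\mathcal{P}_\Gamma$ by a tree $\mathit{DP}$-system $\mathcal{P}_{\Gamma_0}$ built from the same set $E$, with one dynamic point at a vertex $v_0$, whose growth rate is already equal to or slower than that of $\mathcal{P}_\Gamma$. It then suffices to produce from $\mathcal{P}_{\Gamma_0}$ a linear system $\mathcal{P}'_{\Gamma'}$ on the same edges with growth rate equal to or slower than that of $\mathcal{P}_{\Gamma_0}$, since the two inequalities compose. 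To set up the surgeries I would fix the edge $e_s$ of longest stabilization time in $\mathcal{P}_{\Gamma_0}$, together with an $\mathit{LST}$-walk reaching it.

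Next I would run the transformations of Lemmas~\ref{thm:DP_bead}, \ref{thm:DP_beadbroom}, and \ref{thm:beadthree} on $\mathcal{P}_{\Gamma_0}$ with this fixed $e_s$. The key observation is that, because $\Gamma_0$ is acyclic, the bead structure these lemmas generically yield degenerates: there are no odd cycles to merge, so Lemma~\ref{thm:DP_bead} leaves a tree, Lemma~\ref{thm:DP_beadbroom} moves the point to a terminal vertex with $e_s$ on the handle, and the degree reduction of Lemma~\ref{thm:beadthree} — which relocates every even bridge together with its hanging subgraph out to a bead leaf — has nothing but such branches to dispose of and so pushes all of them onto the single handle. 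The result is a linear graph $\Gamma'$ carrying the point at a terminal vertex.

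The growth-rate estimate is the crux. Each surgery above is a \emph{relocation}: a dangling subgraph $g$ is detached at a single cut vertex and reattached strictly farther from $v_0$ (at $v_q$, or at a bead leaf $v_l$). The hard part will be the uniform monotonicity statement that \emph{no} distance from $v_0$ to \emph{any} edge is decreased by such a move. Here I would use that $g$ hangs from a single bridge: every walk from $v_0$ to an edge outside $g$ never enters $g$, so those arrival times are unchanged, while edges inside $g$ are pushed farther away, so their arrival times can only grow. Consequently, exactly as in the lemmas but now for each edge, for every walk $w'$ in $\Gamma'$ there is a walk $w$ in $\Gamma_0$ of the same class with $l(w)\le l(w')$ reaching the corresponding edge, so every point-place of $\Gamma'$ is saturated no earlier than its counterpart in $\Gamma_0$. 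This gives $N_{\Gamma'}(t)\le N_{\Gamma_0}(t)$ off the collision set, hence $N_{\Gamma'}(t)\le N_{\Gamma}(t)$, which is the claim.

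I expect the main obstacle to be precisely this last upgrade. Lemmas~\ref{thm:DP_bead}, \ref{thm:DP_beadbroom}, and~\ref{thm:beadthree} only needed, and only verified, that the relocations create no \emph{shorter} walk to the single edge $e_s$, so as to keep the system inside $\mathit{LSTDP}(E)$; to control the full count $N_\Gamma(t)$ rather than stabilization time alone, I need the stronger, edge-uniform conclusion that all distances from $v_0$ are non-decreasing. Establishing this rests on the clean separation, in a bead graph, between the edges of a relocated subgraph and the edges reachable without it, and I would carry out that case split carefully to conclude.
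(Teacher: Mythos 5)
Your proposal follows essentially the same route as the paper: apply Lemma~\ref{lemma:slowtree} to pass to a one-point tree system, then run the surgeries of Lemmas~\ref{thm:DP_bead}, \ref{thm:DP_beadbroom}, and~\ref{thm:beadthree} (which on an acyclic graph yield a linear graph with the point at a terminal vertex), and conclude because those surgeries introduce no new, shorter walks to any edge, so the growth rate cannot increase. The ``main obstacle'' you flag --- upgrading the lemmas' control of walks to the single edge $e_s$ into an edge-uniform statement --- is precisely what the paper asserts, even more tersely than you, with the phrase that the transformations ``do not add paths of new lengths,'' so your treatment of that step is, if anything, more careful than the original.
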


%
%It could be interesting to try to apply it under other restrictions and
%to other models, like timed versions of automata or Petri nets.

It is important to note that the suggested procedures are quite local,
\textit{i.e.,} they are bound to a vertex and stabilization edge.
This means, for example, that we cannot use it, at least straightforwardly,
to construct or prove
that there exists a linear metric graph with many dynamic points
that demonstrates the slowest growth rate at all edges,
as moving subgraphs may reduce the growth rate of one edge but
intensify thus of another.

%%??? check
%%A linear metric graph with edges sorted in the decreasing order started
%%from the vertex with initial point provides the longest linear time,
%%as any path from $e_0$, \textit{i.e.}, edge incident to the initial vertex,
%%back to $e_0$ which involves $e_i$, contains double entries of $e_i$,
%%thus, having long edges close to $e_0$ makes all such paths longer.

%== End of Beads metric graphs with longest stabilization time

%=======================================================================
%== Beads metric graphs with longest saturations time
\section{Saturation time for bead $\mathit{DP}$-systems with incommensurable
edge lengths}

$\mathit{DP}$-system on a metric graph with incommensurable edges
(\textit{i.e.}, the rank of its edges over $\zQbb$ is greater than one)
never stabilizes and, moreover, ergodic
\cite{ChernyshevTolchennikovShafarevich16_BehaviorQuasiParticlesHybridSp_RelGeomGeodesicsAnalyticNT}.
However, it is possible to extend the notion of stabilization time to
systems with incommensurable edges using the notion of $\varepsilon$-net
\cite{hausdorff_set_2005}.

\begin{figure}[!t]
    \begin{center}
        \centerline{\includegraphics[scale=0.9]{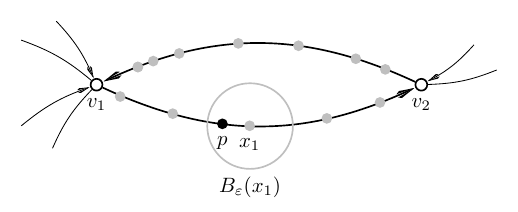}}
        %\centerline{\includegraphics[scale=0.9]{Fig10.pdf}}
        \caption{Edge $\langle v_1,v_2 \rangle$ with
            $\varepsilon$-net $N_{\varepsilon}$ depicted as gray points}
        \label{fig:mg_epsnet}
    \end{center}
\end{figure}

For a point $x$ on $\Gamma$, a closed 1-ball $B_{\varepsilon}(x)$ with
radius $\varepsilon$ around $x$ is the set of all points in $\Gamma$
whose distance from $x$ in the metric $\rho$ is less than $\varepsilon$,
\textit{i.e.,} $\{p\in\Gamma | \rho(p,x)\le \varepsilon \}$.
A finite subset $N_{\varepsilon}$ of points of metric graph $\Gamma$ is
an \emph{$\varepsilon$-net} if, for each point $p\in \Gamma$, distance
$\rho(p,N_{\varepsilon})$ is less than $\varepsilon$.
Thus, the closed 1-balls with centres in $N_{\varepsilon}$ cover
the whole $\Gamma$.
Now we consider \emph{$\varepsilon$-net} over point-places,
and 1-ball $B_{\varepsilon}(x)$ are set of points at $\varepsilon$-distance or
less from moving point-place $x$.
In Figure~\ref{fig:mg_epsnet}, edge $\langle v_1, v_2\rangle$ holds only one
dynamic point $p$, which saturates 1-ball (\textit{i.e.,} line segment bounded
by the circle) $B_{\varepsilon}(x_1)$;
note that, points of $\varepsilon$-net, coloured gray,
are not necessarily regularly distributed.
For $\mathit{DP}$-system $\mathcal{P}_\Gamma$ and $\varepsilon$-net
$N_{\varepsilon}$ on point-places of $\Gamma$,
the saturation time $t_{s}(N_{\varepsilon})$ of $\mathcal{P}_\Gamma$ is
the earliest point in time when, for each point $x$ in $N_{\varepsilon}$,
there is a point $q$ in $\mathcal{P}_\Gamma$ with $\rho(x,q) \le \varepsilon$.
For given $\varepsilon$, the saturation time $t_{s}(\varepsilon)$ of
$\mathcal{P}_\Gamma$ is the supremum of the set of saturation times for all
possible $\varepsilon$-nets on $\Gamma$,
\textit{i.e.}, $t_{s}(\varepsilon) = \sup\{t_{s}(N_{\varepsilon})\}$.
We call 1-ball $B_{\varepsilon}(x)$ on $\Gamma$ with centre $x$ and radius
$\varepsilon$ \emph{saturated} when point $p$ of $\mathcal{P}_\Gamma$
is bound to a place-point in $B_{\varepsilon}(x)$.

\begin{theorem}
Let $E$ be a given set of edges with a length function $l$ on $E$,
and the rank of $E$ over $\zQbb$ is greater than one.
There is a $\mathit{DP}$-system on a bead metric graph with vertices of degree
three or less and with an initial point in one of its terminal vertices
that has the longest saturation time among systems in $\mathit{DP}(E)$.
\label{coro:fin_satur}
\end{theorem}

\begin{proof}
At time $t_{s}(\varepsilon)$, points of $\mathcal{P}_\Gamma$ form an
$\varepsilon$-net themselves; if they do not, it is possible to construct
$N_{\varepsilon}$ that is not saturated at $t_{s}(\varepsilon)$ by taking the centre of a non-covered
ball into $N_{\varepsilon}$, which contradicts the definition of
$t_{s}(\varepsilon)$.
Thus, $t_{s}(\varepsilon)$ is the time point when all the non-saturated balls
vanish.
Fix ball $B_{\varepsilon}(x)$ that is saturated at $t_{s}(\varepsilon)$
by dynamic point $q$ and
build $\varepsilon$-net $N_{\varepsilon}$ such that,
within $B_{\varepsilon}(x)$,
only the centre of $B_{\varepsilon}(x)$ belongs to $N_{\varepsilon}$.
Let $q_0$ in vertex $v_0$ be the ancestor of $q$ in the initial state of
$\mathcal{P}_\Gamma$,
$e_s$ be an edge containing centre $x$, and $p$ be the path of $q$
from $v_0$ to $B_{\varepsilon}(x)$.
By the graph surgery suggested in
Lemma~\ref{thm:DP_bead}--Theorem~\ref{coro:fin},
we build a bead graph with vertex degrees not greater than three that preserves
$v_0e_s$-path $p$.
\end{proof}

%== End of Beads metric graphs with longest stabilization time

\section{Conclusions and future work}
Neither explicit formula nor asymptotic estimates are known for the longest
stabilization time of $\mathit{DP}$-systems constructed from set of edges $E$ or
for the stabilization time of an arbitrary $\mathit{DP}$-system 
in the general case
\cite{ChernyshevTolchennikovShafarevich16_BehaviorQuasiParticlesHybridSp_RelGeomGeodesicsAnalyticNT}.
Theorem~\ref{coro:fin} allows us to obtain the longest stabilization time of
$DP(E)$ by studying only bead graphs of degree not higher than 3.
It also allows to narrow down the search state if
we want to compute the longest stabilization time algorithmically.

%some leads might come from overapproximating it by considering a subclass which
%contains $LSTDP$-systems.
%
%If we want to study the longest stabilization time
%%%For the $DP$-systems constructed from $E$,
%%%Theorem~\ref{coro:fin} allows us to narrow the set to only bead
%%%graphs of degree not higher than 3.
%This, probably, may ease progress towards finding the upper
%bound of stabilization time for an arbitrary $\mathit{DP}$-system,
%which is our distant goal.
%
%%The longest stabilization time is stipulated by a longest path
%%among minimal paths in equivalence classes.
%
%It also allows to narrow down the search state if, for given $E$, 
%we want to find the longest stabilization time of $DP(E)$ algorithmically.
%%%To find the longest stabilization time for the set of metric graphs
%%%constructed from $E$, we may consider only bead graphs.
%The longest stabilization time is stipulated by a longest path
%among minimal paths in equivalence classes.
%%%A linear metric graph with edges sorted in the decreasing order started
%%%from the vertex with initial point provides the longest linear time,
%%%as any path from $e_0$, i.e., edge incident to the initial vertex,
%%%back to $e_0$ which involves $e_i$, contains double entries of $e_i$,
%%%thus, having long edges close to $e_0$ makes all such paths longer.

The proofs are mostly agnostic to the numerical properties of edge
lengths and specific structure of a metric graph.
Thus, it allowed us to extend the results to incommensurable metric graphs
by adapting the notion of stabilization time using $\varepsilon$-nets.
It could be interesting to combine the suggested graph surgery with
\cite{Chernyshev17_SecondTermNumberPointsMetricGraph}

Considering the support set of an $LST$-walk, one could hypothesizes that,
even the absolute difference between longest stabilization time of $DP(E)$
and set of linear graphs built from $E$ could be arbitrary large (Fig.~\ref{fig:MG_uncuttable}),
there is always a linear graph with stabilization time not more than two times
shorter than the longest stabilization time in $DP(E)$.

%\subsubsection*{Acknowledgments.}
%The publication was prepared within the framework of 
%%%the Academic Fund Program at the National Research University 
%%%Higher School of Economics (HSE) in 2017-2018 (grant \textnumero{}17-01-0089)
%%%%\textnumero 
%%%and by the Russian Academic Excellence Project ``5-100''.
 
\subsection*{Acknowledgements}
%\acknowledgements
The reported study was funded by RFBR, project number 20-07-01103a.
Author expresses gratitude to V.~Chernyshev for providing the state-of-the-art
view of the field and pointing to some recent results, and to A.~G.~Khovanskii
for explaining the useful notion of $\varepsilon$-net.

\bibliography{leo_norus}
%\printbibliography

\end{document}